\newtheorem{theorem}{Theorem}
\newtheorem{proposition}{Proposition}
\newtheorem{lemma}{Lemma}
\theoremstyle{remark}
\newenvironment{acknowledgement}{\par\medskip\noindent\emph{Acknowledgment.}}
\newcommand{\beq}{\begin{equation}}
\newcommand{\eeq}{\end{equation}}
\newcommand{\ba}{\begin{array}}
\newcommand{\ea}{\end{array}}
\newcommand{\bea}{\begin{eqnarray}}
\newcommand{\eea}{\end{eqnarray}}
\newcommand{\R}{\mathbb{R}}
\newcommand{\Sp}{\mathbb{S}}
\newcommand{\N}{\mathbb{N}}
\newcommand{\C}{\mathbb{C}}
\begin{document}

\title[Resonance Clusters for the Hydrogen Stark Hamiltonian]{Semiclassical Szeg\"o Limit of Resonance Clusters for the Hydrogen Atom Stark Hamiltonian}

\author[P.\ D.\ Hislop]{Peter D.\ Hislop}
\address{Department of Mathematics,
    University of Kentucky,
    Lexington, Kentucky  40506-0027, USA}
\email{hislop@ms.uky.edu}

\author[C.\ Villegas-Blas]{Carlos Villegas-Blas}
\address{Universidad Nacional Aut\'onoma de M\'exico, Instituto de Matem\'aticas,
Unidad Cuernavaca, Mexico} \email{villegas@matcuer.unam.mx}


\vspace{.1in}

\begin{abstract}
We study the weighted averages of resonance
clusters for the hydrogen atom with a Stark electric field in the weak field limit.
We prove a semiclassical Szeg\"o-type theorem for resonance clusters showing
that the limiting distribution of the resonance shifts concentrates on the classical energy
surface corresponding to a rescaled eigenvalue of the hydrogen atom
Hamiltonian. This result extends Szeg\"o-type results on eigenvalue
clusters to resonance clusters. There are two new features in this work: first, the study of resonance clusters
requires the use of non self-adjoint operators, and second, the
Stark perturbation is unbounded so control of the perturbation is achieved
using localization properties of coherent states corresponding to hydrogen atom eigenvalues.
\end{abstract}

\maketitle
\thispagestyle{empty}


\tableofcontents


\section{Introduction: Semiclassical Szeg\"o limits}\label{sec:szego-intro}

The behavior of eigenvalue clusters resulting from the perturbation
of highly degenerate eigenvalues of elliptic operators on compact
manifolds has been studied by many researchers, notably by V.\
Guillemin \cite{guillemin1} and by A.\ Weinstein \cite{weinstein}.
The basic idea is the following. Suppose that $E_N$ is an eigenvalue
of an elliptic self-adjoint operator $P$ with multiplicity $d_N$,
growing with $N$. For bounded perturbations $Q_N$, satisfying $\|
Q_N \| \rightarrow 0$ as $N \rightarrow \infty$, there is a cluster
of nearby eigenvalues $E_{N,j}$, with the same total multiplicity
$d_N$ as $E_N$, that tend to $E_N$ as $N \rightarrow \infty$. The
eigenvalue shifts $\nu_{N,j}$ are defined by $\nu_{N, j} \equiv E_N
- E_{N,j}$. The basic question concerns the distribution of these
eigenvalue shifts as $N \rightarrow \infty$. Since the eigenvalue
$E_N$ is increasing with $N$, the Hamiltonian is rescaled so that
the eigenvalue $\tilde{E}$ is independent of $N$. This rescaling
results in a rescaled perturbation and rescaled eigenvalue shifts
$\tilde{\nu}_{N,j}$. If the rescaled perturbation $\tilde{Q}_N$
vanishes with a rate $\kappa (N)$, then the rescaled eigenvalue
shifts $\tilde{\nu}_{N,j}$ vanish at the same rate. Consequently,
the point measure $(1/d_N) \sum_{j=1}^{d_N} \delta ( \lambda -
\tilde{\nu}_{N,j}/ \kappa (N)) d \lambda$ should have a weak limit
as $N \rightarrow \infty$. In particular, if $\rho \in C_0(\R)$,
then, roughly speaking, one proves that \beq\label{eq:szego-intro1}
\lim_{N \rightarrow \infty} \frac{1}{d_N} \sum_{j=1}^{d_N} \rho
\left( \frac{\tilde{\nu}_{N,j}}{\kappa (N)} \right) =
\int_{\mathcal{A}} \rho (\tilde{Q} (\alpha)) ~d \mu (\alpha) , \eeq
where $\mathcal{A}$ is a parameterization of the classical
Hamiltonian orbits with energy $\tilde{E}$ and $\mu$ is an invariant measure on this energy surface.
The effective potential
$\tilde{Q}$ is the average of a re-scaled perturbation over one of
these orbits.

In the semiclassical context,
we interpret $h = 1/N$ as Planck's constant, and then this formula
\eqref{eq:szego-intro1} is what we mean by a {\it semiclassical Szeg\"o
limit} for the appropriately rescaled eigenvalue shifts of perturbed operator.
Formula \eqref{eq:szego-intro1} expresses the weak limit of the distribution function
of the eigenvalue shifts in terms of averages of the perturbation over corresponding classical orbits.

The behavior of eigenvalue clusters for bounded perturbations $V$ of the Laplacian $- \Delta_{\Sp^n}$
on $L^2 (\Sp^n)$ were studied by Guillemin
\cite{guillemin1}.
Weinstein \cite{weinstein} studied these Szeg\"o-type limits for the Laplacian on a compact
manifold perturbed by a bounded, real-valued function $V$.
In both cases, the semiclassical parameter is the index of the unperturbed eigenvalue. The integral on the right in \eqref{eq:szego-intro1} is the average of
potential perturbation $V$ over closed geodesics of the
manifold. Brummelhuis and Uribe \cite{B-U} extended these results to the study of the semiclassical Schr\"odinger
operator $- h^2 \Delta + V$ on $L^2 (\R^n)$. The potential $V \geq 0$ is smooth with $V_\infty \equiv
\liminf_{|x| \rightarrow \infty} V(x) > 0$. They studied the semiclassical behavior of the eigenvalue cluster near an energy $0 < E^2 < V_\infty$. They proved an asymptotic expansion of $Tr \rho[ (H_h^{1/2} - E) h^{-1}]$ as $h \rightarrow 0$ and related the coefficients to the classical flow for $p^2 +V$ on the energy surface $E$.

Uribe and Villegas-Blas \cite{uribe-villegas1}
extended these results by considering perturbations of the hydrogen atom Hamiltonian
by operators of the form $\epsilon (h) Q_h$ where
$Q_h$ is a zero-order pseudo-differential operator uniformly bounded in $h$
and $\epsilon (h) = \mathcal{O}(h^{1+\delta})$, for $\delta >0$.
The main novelty comes from the fact that for a fixed negative energy,
there are two types of classical orbits for the Hamiltonian flow on an energy surface for negative energy.
There are bounded periodic orbits corresponding to nonzero angular momentum,
and there are unbounded collision orbits with zero angular momentum.
Uribe-Villegas \cite{uribe-villegas1} used Moser's regularization of collision orbits
so that all orbits are considered periodic orbits. In this regularization, all orbits
correspond to geodesics on the sphere $\Sp^3$. Those passing through the north pole are the collision orbits.
The geodesics on $\Sp^3$ are parameterized by a certain five-dimensional set $\mathcal{A}$
described in Appendix 1, section \ref{sec:app1-coherentstates1}.

In this paper, we extend these results to resonances of the Stark hydrogen Hamiltonian. We prove a
Szeg\"o-type result on the semiclassical behavior of the distribution of
the resonance shifts. To explain this in more detail,
let $E_N (h) =- 1 / (2 h^2 N^2)$ be an eigenvalue of the hydrogen atom Hamiltonian
$H_V(h) = - (1/2) h^2 \Delta - |x|^{-1}$, defined on $L^2 (\R^3)$ (see (\ref{eq:hydro1})),
with multiplicity $d_N = N^2$. Applying an external electric field of strength $F > 0$,
the resulting Hamiltonian $H_V(h,F) = H_V(h) + F \epsilon (h) x_1$, called here the {\it Stark hydrogen atom Hamiltonian} (see (\ref{eq:stark1})),
has purely absolutely continuous spectrum equal to the
real line $\R$. We will keep $F > 0$ constant
assume that $\epsilon (h)$ vanishes as $h \rightarrow 0$ corresponding to weak field limit.
Under the perturbation by the electric field, the
eigenvalue $E_N(h) <0$ gives rise to a cluster of nearby resonances $z_{N,i}(h, F),
i= 1, \ldots , K_N$, with total algebraic multiplicity equal to $d_N$. We have $\Re z_{N,i} (h, F) \sim E_N (h)$
and the imaginary part of the resonance $\Im  z_{N,i}(h,F)$ is exponentially small
in $1/(hF)$.

To study the semiclassical limit, we take $h = 1/N$ as in Uribe-Villegas \cite{uribe-villegas1}. Then, the family of
hydrogen atom Hamiltonians
$H_V ( 1/N, F=0)$ has a fixed eigenvalue $E_N (1/N)=  -1 /2$. The Stark hydrogen atom
Hamiltonian $H_V (1/N, F)$ has a cluster of nearby resonances $z_{N,i} ( 1/N, F)$ that
converge to $-1/2$ as $N \rightarrow \infty$. Our main result is the following Szeg\"o-type theorem for this resonance cluster in the large $N$ limit corresponding to a weak electric field.

\begin{theorem}\label{thm:main1}
Let $F > 0$ be fixed, and let
$\rho$ be a function analytic in a disk of radius $3 F$ about $z = 0$. Let
$\epsilon (h) = h^{6 + \delta}$, for some $\delta > 0$, small, and take $h = 1/N$, with $N \in \N$.
For the resonance cluster $\{ z_{N,i}(1/N,F)\}$ near $E_N( 1/N ) = -1/2$,
we have
\bea\label{eq:szego1}
\lefteqn{\lim_{N \rightarrow \infty} \frac{1}{d_N}
\sum_{j=1}^{d_N} \rho \left( \frac{ z_{N,i} (1/N, F) - E_N (1/N) }{ \epsilon
(1/N) } \right)}  \nonumber \\
  &=&  \int_{\Sigma (-1/2) } \rho \left( \frac{1}{2 \pi} \int_0^{2 \pi} F
\cdot ( \tilde{\phi}_t ( x, p ) )_1 ~dt \right) ~ d\mu_L (x, p)
,
 \eea
where $\tilde{\phi}_t$ is the Hamiltonian flow for the Kepler problem
on the energy surface $\Sigma (-1/2)$ with collision orbits treated as in \cite{uribe-villegas1},
and $\tilde{\phi}_t ( x(t), p(t) ) )_1$ is the projection of this flow onto the first coordinate axis
$x_1$. The measure $\mu_L$ is the normalized
Liouville measure on restricted to the energy surface $\Sigma (-1/2)$.
\end{theorem}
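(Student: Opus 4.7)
The plan is to (i) realize the resonances as eigenvalues of a complex-deformed operator, (ii) rewrite the sum in (\ref{eq:szego1}) as a trace against a Riesz projector of rank $d_N$, (iii) reduce it to a trace of $\rho$ applied to the first-order compressed operator $F\,\Pi_N x_1 \Pi_N$ on the $N^2$-dimensional hydrogen eigenspace, and (iv) compute the $N\to\infty$ limit using the hydrogen-atom coherent states of \cite{uribe-villegas1}. First I would apply a complex deformation (exterior scaling, or a Herbst-type complex translation adapted to Stark) to obtain a non-self-adjoint operator $H_V(h,F)_\theta$ whose eigenvalues in a sector below the real axis coincide with the resonances $z_{N,j}(h,F)$. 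With $h=1/N$ and $F$ fixed, the unperturbed eigenvalue $E_N(1/N)=-1/2$ is preserved with multiplicity $d_N=N^2$; because $\epsilon(1/N)=N^{-(6+\delta)}$ is much smaller than the $O(N^{-3})$ spectral gap separating $-1/2$ from the neighboring hydrogen eigenvalues, analytic perturbation theory provides a contour $\Gamma_N$ enclosing only the resonance cluster together with the associated Riesz projector $P_N(F)=(2\pi i)^{-1}\oint_{\Gamma_N}(z-H_V(1/N,F)_\theta)^{-1}\,dz$ of rank $d_N$.

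Since $\rho$ is analytic in $|w|<3F$ and the rescaled cluster lies (as Step (iv) will justify) in a disk of strictly smaller radius, holomorphic functional calculus identifies the left-hand side of (\ref{eq:szego1}) with
\[
\frac{1}{d_N}\,\mathrm{tr}\!\left[\rho\!\left(\tfrac{H_V(1/N,F)_\theta - E_N(1/N)}{\epsilon(1/N)}\right) P_N(F)\right].
\]
Kato's analytic perturbation theory then provides a bounded, invertible transformation intertwining $P_N(F)$ with the self-adjoint hydrogen projector $\Pi_N$, while Rayleigh--Schr\"odinger expansions show that the rescaled reduced operator equals $B_N = F\,\Pi_N x_1 \Pi_N + O_{\mathrm{op}}(\epsilon(1/N))$, the error arising from second- and higher-order corrections being of order $\epsilon$ relative to the first-order term. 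After dividing by $d_N$ these errors are absorbed by uniform continuity of $\rho$ on compacts.

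The main obstacle is controlling the unbounded Stark perturbation at the reduced level: one must show that $\Pi_N x_1 \Pi_N$ is bounded uniformly in $N$ and converges, in a symbol-calculus sense, to the classical time average of $x_1$ along Kepler orbits of energy $-1/2$. For this I would adapt the hydrogen-atom coherent states $|\alpha\rangle$ of \cite{uribe-villegas1}, parameterized by $\alpha\in\mathcal A$ (the regularized geodesic space on $\Sp^3$), which satisfy a resolution of identity $\Pi_N=\int_{\mathcal A}|\alpha\rangle\langle\alpha|\,d\mu(\alpha)$ and localize on Kepler ellipses. Sharp decay estimates for $\langle\alpha|x_1|\alpha'\rangle$, with the collision orbits handled by Moser regularization exactly as in the eigenvalue-cluster case, would give both the uniform bound on $\Pi_N x_1 \Pi_N$ (justifying the radius-$3F$ hypothesis on $\rho$) and identify its principal symbol as $\sigma(\alpha) = \frac{1}{2\pi}\int_0^{2\pi}(\tilde\phi_t(\alpha))_1\,dt$.

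Finally, expanding $\rho$ as a power series on $|w|<3F$ and applying the coherent-state symbol calculus to each monomial yields
\[
\frac{1}{d_N}\,\mathrm{tr}\,B_N^k \longrightarrow \int_{\Sigma(-1/2)} \bigl(F\sigma(\alpha)\bigr)^k\,d\mu_L(\alpha) \qquad (N\to\infty),
\]
with errors geometric in $k$; summing in $k$ gives (\ref{eq:szego1}). The two features new beyond \cite{uribe-villegas1} are the non-self-adjoint contour reduction of Step (ii) and the coherent-state taming of the unbounded Stark perturbation in Steps (iii)--(iv).
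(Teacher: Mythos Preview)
Your proposal is correct and follows essentially the same route as the paper: complex dilation (Herbst's framework) to realize the resonances as eigenvalues of a non-self-adjoint operator, contour/Riesz-projector reduction of the trace to that of $(\Pi_N^0 W_h \Pi_N^0)^m$ up to $O(N^{-\beta})$, coherent-state localization to control the unbounded Stark term and obtain the uniform $O(1)$ bound on the rescaled compressed operator, and finally the Thomas--Villegas/Uribe--Villegas Szeg\"o limit for polynomially bounded perturbations applied monomial by monomial. The only cosmetic differences are that the paper first rescales by $D_{h^2}$ and then works with $S_h(F)=H_V+h^4\epsilon(h)Fx_1$ (so the relevant operator is $h^2 F\,\Pi_N^0 x_1 \Pi_N^0$ rather than your $F\,\Pi_N x_1 \Pi_N$, which are unitarily equivalent), and that instead of invoking a Kato--Nagy intertwining the paper proves directly the projector estimate $\|\Pi_N^0(\theta)-P_N(\theta)\|=O(N^{6-K-\delta})$ via explicit resolvent bounds; this is where the exponent $6+\delta$ in $\epsilon(h)$ originates, a point your outline leaves implicit.
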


This result parallels and extends the result of Uribe and Villegas-Blas \cite{uribe-villegas1} on
eigenvalue clusters formed by bounded perturbations $Q_h$ of the hydrogen atom Hamiltonian.
There are three new main components in this work. The first is that the Stark perturbation is unbounded.
The bounded perturbation $Q_h$ of Uribe-Villegas \cite{uribe-villegas1} is replaced by the unbounded perturbation
$F \epsilon (h) x_1$ with small field strength as $h \rightarrow 0$.
Control of the unbounded perturbation is obtained by utilizing the localization properties of coherent states
of the hydrogen atom Hamiltonian.
The second is the fact that we work with resonances that appear as eigenvalues of non self-adjoint operators.
Consequently, many estimates appearing in \cite{uribe-villegas1} have to be established for non self-adjoint
operators.
Thirdly, we use a semiclassical result of Thomas-Villegas \cite[Theorem 4.2]{thomas-villegas1}
to evaluate the trace of the Stark perturbation restricted to certain finite-dimensional subspaces
(see Theorem \ref{th:szego-limit1}.)

\subsection{Contents}\label{subsec:contents1}

In section \ref{sec:scale1}, we rescale the Stark hydrogen atom Hamiltonian using the dilation group. This establishes a countable family of
rescaled Hamiltonians all having  a fixed eigenvalue $-1/2$. In section \ref{sec:resonance1}, we review the results of Herbst
\cite{herbst1} on resonances for the Stark hydrogen atom Hamiltonian. We prove several important resolvent estimates necessary for
our work, extending some estimates of Herbst \cite{herbst1}. The main semiclassical result is proved in section \ref{sec:trace1}. We show that the semiclassical Szeg\"o-type limit can be obtained by evaluating the trace
of the Stark perturbation restricted to the eigenspace of the hydrogen atom Hamiltonian.
This requires decay properties of the analytically
continued coherent states. The final part of the proof of Theorem \ref{thm:main1} is proved in section \ref{sec:poly1}. We apply a
theorem of Thomas-Villegas
\cite{thomas-villegas1} to polynomially bounded perturbations in order to evaluate the large $N$ limit of the trace of the Stark perturbation
restricted to the hydrogen atom Hamiltonian eigenspace.

\vspace{.1in}


\begin{acknowledgement}

PDH was partially supported by NSF grant 0803379 during the time this work was done.
CV-B was partially supported by the project PAPIIT-UNAM IN 109610-2 and thanks
the members of the Department of Mathematics of the University of Kentucky for their hospitality during a visit.

\end{acknowledgement}


\section{Scaling}\label{sec:scale1}


The hydrogen atom Hamiltonian $H_V (h)$ with the semiclassical parameter $h$ acts on the Hilbert space
$L^2 (\R^3)$. It is the self-adjoint operator given by
\beq\label{eq:hydro1}
H_V(h) = - \frac{h^2}{2} \Delta - \frac{1}{|x|}.
\eeq
The discrete spectrum consists of an infinite family of eigenvalues $E_k (h)$
\beq\label{eq:ev1}
E_k(h) = \frac{-1}{2 h^2 k^2}, ~~k \in \N,
\eeq
each eigenvalue having multiplicity $k^2$.
The spacing between successive eigenvalues is $\mathcal{O} (k^{-3})$.

With the choice of  $h = 1/N$ and $k = N$, we see that $E_{k=N}(h=1/N) = -1/2$ is in the spectra of the
countable family of Hamiltonians $H_V(1/N)$, $N \in \N$. The multiplicity of the eigenvalue
$-1/2$ grows as $N^2$.

The unscaled Stark hydrogen Hamiltonian is
\bea\label{eq:stark1}
H_V(h,F) &=& - \frac{h^2}{2} \Delta - \frac{1}{|x|} +  \epsilon ( h) F  x_1 \nonumber \\
 &=& H_V(h) + w_h (F) ,
\eea where $F \geq 0$ is the electric field strength and we have
chosen the $x_1$-direction for the field.
We consider a parameter $\epsilon (h) = h^{K+ \delta}$, for $0 < \delta < 1$. We will choose $K \in \N$ below.

The dilation group $D_\alpha$, $\alpha > 0$, a representation of the multiplicative group $\R^+$,
has a unitary implementation of $L^2 ( \R^d)$ given by
\beq\label{eq:dilation0}
(D_\alpha f)(x) = \alpha^{d/2} f(\alpha x).
\eeq
We scale the Hamiltonian in (\ref{eq:stark1}) by $\alpha = h^2$:
\beq\label{eq:scale2}
D_{h^2} H_V(h, F) D_{h^{-2}} = \frac{1}{h^2}
\left( - \frac{1}{2} \Delta - \frac{1}{|x|} + h^4 \epsilon (h) F x_1
\right) .
\eeq
We call this rescaled Hamiltonian $S_h (F)$ so that
\bea\label{eq:rescale1}
S_h(F) & = & - \frac{1}{2} \Delta - \frac{1}{|x|} + h^4 \epsilon (h) F x_1 \nonumber \\
 & = & H_V  + h^4 \epsilon (h) F x_1,
\eea
where we write $H_V \equiv H_V(1)$ and $W_h (F) = h^4 \epsilon (h) F x_1$ is the rescaled perturbation.
Note that for $F=0$, the eigenvalues of $S_h(0)$ are given by $E_k(1) = - 1/ (2 k^2)$,
with $k \in \N$. We will keep $F > 0$ fixed. The effective electric field is $h^4 \epsilon (h) F$
and we can make this small be taking $h \rightarrow 0$, or, equivalently, with $h = 1 / N$,
by taking $N \rightarrow \infty$.


\section{Resonances of the Stark hydrogen Hamiltonian}\label{sec:resonance1}

The hydrogen atom Hamiltonian $H_V (h, F)$ with an external electric
field $F x_1$ is given in (\ref{eq:stark1}). We write $H_0(h,F)$ for the Stark
Hamiltonian with $V=0$. {\it For this section
only, we take $h = 1$ and write $H_V (F) = H_V(1,F)$, for the hydrogen atom Stark Hamiltonian,
and $H_V = H_V(1,0)$, when the field $F=0$.
The Stark Hamiltonian with
$V=0$ and $h = 1$ is denoted by $H_0(F) = H_0 (1,F) = - (1/2)\Delta + \epsilon (h) F x_1$.
and $H_V = H_V(1,0)$.} For $F \neq 0$, the spectrum of $H_V(F)$ is
purely absolutely continuous and equal to $\R$. We are interested in
the fate of the negative eigenvalues of $H_V$ when the field
$F$ is turned on. We review the results of Herbst \cite{herbst1} on
the resonances associated with the $F \neq 0$ case. Herbst's article
deals with more general Stark Hamiltonians but we cite and use his results
only for the hydrogen atom case of interest here for which $V$ is the Coulomb potential.


\subsection{Dilation analyticity}\label{subsec:dilanalyt1}

The dilated Stark hydrogen Hamiltonian is obtained by conjugating  (\ref{eq:stark1}) with the unitary
dilation group $D_\alpha$ defined in (\ref{eq:dilation0}). We take
$\alpha = e^\theta > 0$, for $\theta \in \R$. We obtain
\bea\label{eq:dilation1}
H_V( F, \theta) &=& D_{exp(\theta)} H_V(F) D_{exp(- \theta)} \nonumber \\
 &=&   - \frac{e^{-2 \theta} }{2} \Delta - \frac{e^{-\theta} }{|x|} + e^\theta  F  x_1 .
\eea

We are interested in extending this formula to $\theta$ with $\Im \theta \neq 0$.
There are two properties that need to be checked: the analyticity of the potential, and the analyticity of the Stark
Hamiltonian. Since the potential is a Coulomb potential, we have $V(\theta) = e^{-\theta} V$,
so $V(\theta)$ is a type A analytic family of operators. Furthermore, as $V(-\Delta + 1)^{-1}$ is a compact operator, it
follows that $V(\theta) ( -\Delta + 1)^{-1}$ is a compact operator-valued analytic function of $\theta$ for $\theta \in \C$.
In accordance with the Herbst's hypothesis \cite[p.\ 287]{herbst1}, we may take $\theta_0 = \pi / 3$,
the maximum width of the strip of analyticity allowed by the purely Stark Hamiltonian $H_0(F) = - (1/2) \Delta + Fx_1$.
In order to understand the origin of the bound $\pi / 3$,
we note that for $F \in \R$
\beq\label{eq:pi3-1}
H_0 (F, \theta) = - (1/2) e^{-2 \theta} \Delta + F e^\theta x_1 = e^{-2 \theta}[ - (1/2) \Delta + F e^{3 \theta} x_1 ] .
\eeq
Herbst proved results on the operator $- (1/2) \Delta + F e^{3 \theta} x_1 $ when $\theta$ becomes complex in
section II of \cite{herbst1}. The effective electric field $F e^{3 \theta}$ has a nonzero imaginary part,
necessary for Herbst's results, only if $0 < | \Im \theta| < \pi / 3$.

Herbst \cite{herbst1}
proved the following theorem about the dilated
operator $H_V(F,\theta)$. We consider $F > 0$ fixed.

We recall that for a closed operator with an isolated eigenvalue $z_0$, the algebraic multiplicity
of the eigenvalue is defined as the dimension of the range of the corresponding Riesz projector.

\begin{theorem}\cite[Theorem III.2]{herbst1}\label{th:analyticity1}
For $0 < \Im \theta <  \pi /3$,
the operator $H_V(F, \theta)$ is closed on $D(- \Delta) \cap D( M_{x_1})$.
The operator family $H_V( F, \theta)$ is an analytic family of type A operators in $\theta$.
The spectrum on $H_V( F, \theta)$ is discrete, independent of $\theta$, and the algebraic
multiplicity of each eigenvalue is independent of $\theta$.
\end{theorem}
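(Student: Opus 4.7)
The plan is to write $H_V(F,\theta) = H_0(F,\theta) + V(\theta)$ with $H_0(F,\theta) = -\tfrac{e^{-2\theta}}{2}\Delta + e^\theta F x_1$ the dilated pure Stark operator and $V(\theta) = -e^{-\theta}|x|^{-1}$ the dilated Coulomb part, and then treat $V(\theta)$ as a relatively compact perturbation of $H_0(F,\theta)$ on the common domain $\mathcal{D} = D(-\Delta) \cap D(M_{x_1})$.

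First I would handle the pure Stark piece. By the factorization \eqref{eq:pi3-1}, $H_0(F,\theta) = e^{-2\theta}[-\tfrac12\Delta + Fe^{3\theta}x_1]$ is, up to a nonzero scalar, a Stark operator with complex effective field $Fe^{3\theta}$ whose imaginary part is nonzero exactly when $0 < |\Im\theta| < \pi/3$. This is precisely the range in which the results in Section II of \cite{herbst1} apply, giving that $H_0(F,\theta)$ is closed on $\mathcal{D}$, has empty spectrum, and satisfies the key resolvent estimate $\|(-\Delta+1)(H_0(F,\theta)-z)^{-1}\| < \infty$ for all $z \in \C$. This is the input I would import essentially as a black box.

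Next, since $|x|^{-1} \in L^2(\R^3) + L^\infty(\R^3)$, the operator $V(\theta)(-\Delta+1)^{-1}$ is compact on $L^2(\R^3)$ by the classical Kato argument. Combining with the Stark resolvent bound,
\[
V(\theta)(H_0(F,\theta)-z)^{-1} = V(\theta)(-\Delta+1)^{-1}\cdot(-\Delta+1)(H_0(F,\theta)-z)^{-1}
\]
is a product of a compact and a bounded operator, hence compact, so $V(\theta)$ is $H_0(F,\theta)$-compact with relative bound zero. A standard closedness-under-perturbation theorem then gives that $H_V(F,\theta)$ is closed on $\mathcal{D}$. Type A analyticity is immediate since $\mathcal{D}$ is $\theta$-independent and each coefficient $e^{-2\theta}, e^{-\theta}, e^\theta$ is entire, so $\theta \mapsto H_V(F,\theta)\psi$ is analytic for every $\psi \in \mathcal{D}$. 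The Weyl--Schechter theorem on stability of the essential spectrum under relatively compact perturbations yields $\sigma_{\mathrm{ess}}(H_V(F,\theta)) = \sigma_{\mathrm{ess}}(H_0(F,\theta)) = \emptyset$, so the spectrum is purely discrete with eigenvalues of finite algebraic multiplicity.

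For $\theta$-independence, the family $H_V(F,\theta)$ is unitarily equivalent to $H_V(F)$ along the real axis, so the spectrum and algebraic multiplicities are constant there. Kato's analytic perturbation theory then extends this: each isolated eigenvalue persists, and the associated Riesz projection $P(\theta) = -(2\pi i)^{-1}\oint_\Gamma(H_V(F,\theta)-z)^{-1}\,dz$ is analytic in $\theta$ with locally, hence (by connectedness of the strip) globally, constant finite rank. The main obstacle is the Herbst resolvent estimate invoked in the Stark step: without self-adjointness one cannot fall back on spectral-theoretic shortcuts, and this bound is exactly what lets Coulomb be absorbed as a compact perturbation; it is also the place where the angular restriction $0 < \Im\theta < \pi/3$ genuinely enters, through the condition $\Im(e^{3\theta}) \neq 0$.
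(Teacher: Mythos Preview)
The paper does not give its own proof of this theorem: it is quoted from Herbst \cite[Theorem III.2]{herbst1} and used as input. Your outline is essentially Herbst's argument, and the ingredients you name --- the factorization \eqref{eq:pi3-1}, the compactness of $V(\theta)(-\Delta+1)^{-1}$, and the decomposition $K(F,\theta,z) = V(\theta)(-\Delta+1)^{-1}\cdot(-\Delta+1)(H_0(F,\theta)-z)^{-1}$ --- are exactly those the paper invokes in the discussion preceding the theorem and in the proof of Proposition~\ref{prop:stark2}.

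One step needs repair. Your argument for $\theta$-independence of the discrete eigenvalues starts from the real axis, but for real $\theta$ the operator $H_V(F,\theta)$ is unitarily equivalent to the self-adjoint $H_V(F)$, whose spectrum is all of $\R$ with no isolated eigenvalues; moreover the real axis lies outside the open strip $0<\Im\theta<\pi/3$ where discreteness is asserted, so there is nothing to analytically continue from. The correct mechanism is horizontal translation invariance \emph{within} the strip: for any real $s$ one has $H_V(F,\theta+s)=D_{e^s}H_V(F,\theta)D_{e^{-s}}$ with $D_{e^s}$ unitary, so the spectrum depends only on $\Im\theta$. Since the eigenvalues and the rank of the Riesz projection vary analytically in $\theta$ and are constant along every horizontal line in the strip, they are constant on the whole strip. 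With this correction your sketch is sound.
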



\subsection{Resonances}\label{subsec:resonances1}

Herbst \cite{herbst1} proved that for $V=0$,
the closed operator $H_0 ( F , \theta)$, $F \neq 0$, has no spectrum for $0 < | \Im \theta| < \pi / 3$.
As stated in Theorem \ref{th:analyticity1}, Herbst also showed, using the techniques of dilation analyticity,
that for $0 < \Im \theta <  \pi / 3$, the non self-adjoint Hamiltonian $H_V(F,\theta)$, with $V$ a Coulomb potential,
has isolated eigenvalues with finite algebraic multiplicity.
Furthermore, Herbst proved that these eigenvalues are connected to the
eigenvalues of the $F = 0$ and $\Im \theta = 0$ operators.

\begin{theorem}\cite[Theorem III.3]{herbst1}\label{th:reson1}
Suppose that $E_0$ is a negative eigenvalue of $H_V$, defined in (\ref{eq:hydro1}) with $h=1$,
of multiplicity $N_0$.
Then for $F > 0$ small, there are exactly $N_0$ eigenvalues, counting algebraic
multiplicity, of $H_V (F, \theta)$, as defined in (\ref{eq:dilation1})
with $0 < \Im \theta <  \pi / 3$, nearby,
and as $F \rightarrow 0^+$, these converge to $E_0$.
\end{theorem}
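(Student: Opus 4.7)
The strategy is the classical Riesz projector argument from analytic perturbation theory for closed operators, adapted to the non self-adjoint setting. Fix $\theta$ with $0 < \Im \theta < \pi/3$. For $F=0$ and real $\theta$, $H_V(0,\theta) = D_{e^\theta} H_V D_{e^{-\theta}}$ is unitarily equivalent to $H_V$, so the eigenvalue $E_0$ persists with the same algebraic multiplicity $N_0$; by Theorem \ref{th:analyticity1} taken at $F=0$, the eigenvalues and their multiplicities are $\theta$-independent, hence the same remains true for complex $\theta$ in the strip. Choose a small circle $\Gamma$ centered at $E_0$ of radius $r>0$, small enough that $\Gamma$ encloses no other point of $\sigma(H_V(0,\theta))$, and set
\[
P_0(\theta) = -\frac{1}{2\pi i}\oint_\Gamma \bigl(H_V(0,\theta)-z\bigr)^{-1}\,dz, \qquad P_F(\theta) = -\frac{1}{2\pi i}\oint_\Gamma \bigl(H_V(F,\theta)-z\bigr)^{-1}\,dz,
\]
the latter being defined once $\Gamma$ is shown to lie in the resolvent set of $H_V(F,\theta)$ for small $F$.

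The key analytic step is the norm resolvent convergence
\[
\bigl\| (H_V(F,\theta)-z)^{-1} - (H_V(0,\theta)-z)^{-1} \bigr\| \longrightarrow 0, \qquad F \to 0^+,
\]
uniformly for $z\in\Gamma$. Because $x_1$ is not relatively bounded with respect to $H_V(0,\theta)$, a direct Kato--Rellich estimate applied to the perturbation $F e^\theta x_1$ is unavailable. Instead I would factor through the pure Stark operator: write $H_V(F,\theta)=H_0(F,\theta)+V(\theta)$ with $V(\theta)=-e^{-\theta}/|x|$. Herbst's analysis of the purely Stark operator in section II of \cite{herbst1} provides, for $0<\Im\theta<\pi/3$, uniform bounds on $\|(H_0(F,\theta)-z)^{-1}\|$ and norm convergence to $(H_0(0,\theta)-z)^{-1}=(-\tfrac{1}{2}e^{-2\theta}\Delta-z)^{-1}$ as $F\to 0^+$, valid for $z$ in compact sets that avoid the rotated essential spectrum ray $e^{-2\theta}[0,\infty)$. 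The Coulomb potential $V(\theta)$ is $H_0(F,\theta)$-compact with relative bound uniform in $F$, so the second resolvent identity
\[
(H_V(F,\theta)-z)^{-1} = (H_0(F,\theta)-z)^{-1}\bigl[\,1 + V(\theta)(H_0(F,\theta)-z)^{-1}\,\bigr]^{-1}
\]
together with the analytic Fredholm theorem transfers the convergence from $H_0$ to $H_V$, provided $z$ avoids $\sigma(H_V(0,\theta))$, which is built into the choice of $\Gamma$.

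Once norm resolvent convergence is in hand, $P_F(\theta)\to P_0(\theta)$ in operator norm. Two projections whose difference has norm strictly less than one must have the same rank (the standard argument: the compressions $P_0|_{\mathrm{Ran}\,P_F}$ and $P_F|_{\mathrm{Ran}\,P_0}$ become invertible), hence $\mathrm{rank}\,P_F(\theta)=N_0$ for $F$ sufficiently small. This is exactly the statement that $H_V(F,\theta)$ has precisely $N_0$ eigenvalues, counted with algebraic multiplicity, in the disk bounded by $\Gamma$. Re-running the argument for arbitrarily small radii $r$ yields convergence of these eigenvalues to $E_0$ as $F\to 0^+$. The principal technical obstacle is the uniform-in-$F$ control of the Stark resolvent $(H_0(F,\theta)-z)^{-1}$ for $z$ near $E_0<0$: since the spectra of $H_0(F,\theta)$ (empty in the strip) and $H_0(0,\theta)$ (the rotated half-line) are strikingly different, the required norm convergence is not a soft consequence of monotone or strong convergence arguments and must be derived from explicit formulas, such as the Fourier/Airy representation of the complex Stark resolvent, combined with a Combes--Thomas style bound to dominate the passage $F\to 0^+$ on compact sets of spectral parameter.
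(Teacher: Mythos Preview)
The paper does not give its own proof of this theorem: it is quoted verbatim as Theorem III.3 of Herbst \cite{herbst1}. Your sketch is correct and is essentially Herbst's argument; it is also precisely the machinery the present paper later makes quantitative in Proposition~\ref{prop:stark2} and Lemma~\ref{lemma:proj1}, where the factorization through the pure Stark resolvent and the kernel $K(F,\theta,z)=V(\theta)(H_0(F,\theta)-z)^{-1}$ appears explicitly. The ``principal technical obstacle'' you flag---uniform control of $(H_0(F,\theta)-z)^{-1}$ as $F\to 0$---is handled not by an Airy representation or Combes--Thomas, but by Herbst's quadratic estimate (Proposition II.4 of \cite{herbst1}, restated here in Appendix~\ref{appendix:QE}): it yields $\|x_1(H_0(F,\theta)-z)^{-1}\|=\mathcal{O}(1)$ uniformly in small $F$ for $z$ on $\Gamma$, which together with the second resolvent identity gives the required norm convergence and hence $\|P_F(\theta)-P_0(\theta)\|\to 0$.
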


We also apply Theorems \ref{th:analyticity1} and \ref{th:reson1}
to the scaled operator $S_h(F)$ defined in (\ref{eq:rescale1})
with $F$ of the theorem replaced by $h^4 \epsilon (h) F$, and take $h = 1/N$.
For any fixed $N \in \N$, we consider the resonance cluster $\{ z_{N,j} (h, F) \}$
of $S_h(F, \theta)$ near the eigenvalue $-1 / (2 N^2)$ of $H_V( \theta)$.
Note that the operator $S_h(F)$ has an effective electric field
with strength $h^4 \epsilon (h) F$ that vanishes as $h \rightarrow 0$.
Hence, Theorem \ref{th:reson1} states that the resonances $z_{N,j} (h, F)$
converge to the $N^2$-degenerate eigenvalue $E_N$ as $h \rightarrow 0$.


\subsection{Resolvent estimates}\label{subsec:eresolvent1}

We summarize the resolvent estimates needed from \cite{herbst1}.
We recall that for a closed operator $A$ with domain $D(A)$, the {\it numerical range} of $A$, denoted $W(A)$,
is the smallest convex set generated by $\{ (u, Au) ~|~ u \in D(A) \}$.
We let $H_0(F) = -(1/2) \Delta + Fx_1$ be the Stark Hamiltonian.
Following Herbst \cite{herbst1}, we review the results on Stark Hamiltonians with complex electric fields.

\begin{proposition}\cite[Theorem II.1]{herbst1}
\label{prop:stark1}
We write $F = E e^{i \phi}$, with $E , \phi \in \R$, $E \neq 0$, and $0 < |\phi| < \pi / 3$.
\begin{enumerate}
\item The spectrum of $H_0(F)$ is empty.
\item The numerical range of $H_0(F)$ is the half-plane
\beq\label{eq:nr-stark1}
W(H_0(F))= \left\{ z \in \C ~|~ \Re z > \left( \frac{\cos \phi }{\sin \phi } \right) \Im z \right\},
\eeq
independent of $E \neq 0$.
\item The resolvent is bounded
\beq\label{eq:stark-res1}
\| (H_0 (F) - z)^{-1} \| \leq [\mbox{dist} (z, W(H_0(F)) ) ]^{-1} .
\eeq
\end{enumerate}
\end{proposition}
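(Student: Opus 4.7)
The plan is to prove the three claims in the order (2), (1), (3), treating the emptiness of the spectrum in (1) as the main obstacle.

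For \textbf{part (2)}, I would compute the numerical range directly. On a core of Schwartz functions $u$ with $\|u\|=1$, integration by parts gives
\beq
( u, H_0(F) u ) = \tfrac{1}{2} \| \nabla u \|^2 + F \! \int_{\R^3} x_1 | u(x) |^2 \, dx = a + F b ,
\eeq
where $a = \tfrac{1}{2} \| \nabla u \|^2 > 0$ (any nonzero $L^2$ function is non-constant, so its gradient is not zero) and $b = \int x_1 | u |^2 \, dx \in \R$. Using translated and dilated Gaussians, one checks that $(a,b)$ sweeps $(0,\infty) \times \R$ independently. Writing $F = E e^{i\phi}$, the identifications $\Im z = E b \sin \phi$ and $\Re z = a + E b \cos \phi$ together with elimination of $b$ give exactly the open half-plane $\Re z > (\cos\phi/\sin\phi) \Im z$ of (\ref{eq:nr-stark1}); convexity is automatic by Hausdorff--Toeplitz.

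For \textbf{part (1)}, I would take the Fourier transform so that $H_0(F) - z$ becomes $i F \partial_{p_1} + \tfrac{1}{2} |p|^2 - z$. Viewed fiberwise over the transverse momentum $(p_2,p_3)$, this is a first-order linear ODE in $p_1$ with integrating factor
\beq
\mu(p_1) = \exp\!\left( \frac{1}{iF} \Bigl( \frac{p_1^3}{6} + \bigl( \tfrac{1}{2}(p_2^2+p_3^2) - z \bigr) p_1 \Bigr) \right) .
\eeq
A direct computation gives $\Re(1/(iF)) = - \sin\phi / E$, which is nonzero precisely because $0 < |\phi| < \pi/3$; hence $|\mu(p_1)|$ decays super-exponentially like $\exp(-(\sin\phi/6E) p_1^3)$ at one end of $\R$ and grows super-exponentially at the other. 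Assume without loss of generality $\sin\phi > 0$, so decay occurs at $+\infty$. Integrating from $+\infty$ produces an explicit integral kernel of the form $K(p_1,s;p_2,p_3) = -(iF)^{-1} (\mu(s)/\mu(p_1)) \, \chi_{\{s > p_1\}}$, and after the cubic cancellation $|\mu(s)/\mu(p_1)|$ is dominated by a rapidly decaying function of $s - p_1$, uniformly in $(p_2,p_3)$ and locally uniformly in $z$. A Schur test in $p_1$, combined with Plancherel in the transverse variables, then shows that $K$ defines a bounded inverse of $H_0(F) - z$ on $L^2(\R^3)$ for every $z \in \C$, so the spectrum is empty. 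This ODE/kernel analysis, carried out by Herbst in section II of \cite{herbst1}, is the genuinely technical step.

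For \textbf{part (3)}, once $\sigma(H_0(F)) = \emptyset$ from (1), the resolvent is everywhere defined and bounded, and the standard numerical-range lower bound
\beq
\| (H_0(F) - z) u \| \cdot \| u \| \geq | ( (H_0(F) - z) u, u ) | \geq \mathrm{dist}(z, W(H_0(F))) \, \| u \|^2
\eeq
applied to $u = (H_0(F) - z)^{-1} v$ with $\|v\| = 1$ yields (\ref{eq:stark-res1}) immediately.
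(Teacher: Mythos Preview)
The paper does not supply its own proof of this proposition; it is quoted from Herbst \cite[Theorem II.1]{herbst1} and used as a black box for the resolvent estimates that follow (Lemma~\ref{lemma:resolventest1} and Proposition~\ref{prop:stark2}). There is therefore nothing in the present paper to compare your argument against.

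That said, your sketch is essentially the argument Herbst gives in section~II of \cite{herbst1}: the numerical range in (2) is computed exactly as you do by writing $(u,H_0(F)u)=a+Fb$ with $a>0$ and $b\in\R$ ranging independently via translation and dilation; the emptiness of the spectrum in (1) is obtained by Fourier-transforming to the first-order ODE $iF\partial_{p_1}+\tfrac12|p|^2-z$ and exhibiting the explicit resolvent kernel built from the cubic-exponential integrating factor; and (3) is the standard numerical-range lower bound once the resolvent is known to exist. One small caution in your Schur-test step: the ratio $|\mu(s)/\mu(p_1)|$ is not literally a function of $s-p_1$ alone, since $s^3-p_1^3=(s-p_1)(s^2+sp_1+p_1^2)$ carries a factor depending on $s+p_1$; nonetheless the quadratic factor is nonnegative for $s>p_1$, which is enough decay to make the Schur bound go through, and Herbst carries this out carefully.
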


We now consider the dilated Stark Hamiltonian
$H_0(1,F,\theta) \equiv H_0(F, \theta)$, as defined in (\ref{eq:dilation1}),
\beq\label{eq:dilation2}
H_0(F, \theta) = - (1/2) e^{-2 \theta} \Delta + e^\theta F x_1 .
\eeq
The following operator plays an important role in the analysis:
\beq\label{eq:kernel1}
K(F, \theta, z) \equiv V(\theta) (H_0(F,\theta) - z)^{-1},
\eeq
where the dilated Coulomb potential is given by
\beq\label{eq:dilation3}
V(\theta) = \frac{e^{-\theta}}{|x|} .
\eeq

We prove a convergence estimate for $K(F, \theta ,z) - K(0, \theta ,z)$ with a precise rate of convergence
as $F \rightarrow 0$ (recall that $h=1$ here).
This estimate is possible since the potential is a Coulomb potential.

We recall the basic resolvent estimates. Let $H_0 (\theta) = - (1/2) e^{-2 \theta} \Delta$, and
$H_0 (F, \theta) = - (1/2) e^{-2 \theta} \Delta + F e^\theta x_1$ be the Stark Hamiltonian.
For any $F  \neq 0$,
we have the following basic estimate from Proposition \ref{prop:stark1}:
\beq\label{eq:starkest1}
\|(z - H_0 (F, \theta) )^{-1} \| \leq 1 / d(z, W(H_0(F, \theta) ) .
\eeq
Let $\gamma_N$ be a simple closed contour about $\tilde{E}_N = - 1 / 2 N^2$ of radius $1 / (8 N^3)$.
For $z \in \gamma_N$, we have
\beq\label{eq:starkest2}
\|(z - H_0 (\theta) )^{-1} \| \leq 1 / d(z, e^{-2 \theta} \R^+) = \mathcal{O}(N^{2}).
\eeq
The contour $\gamma_N$ is chosen so that it contains only one eigenvalue $\tilde{E}_N$ of
$H_V( \theta)$. Recall that
$V(\theta) ( H_0 (\theta) + 1)^{-1}$ is a compact analytic operator valued function for
$| \Im \theta | < \pi / 3$.

\begin{proposition}\cite[Proposition III.1]{herbst1}\label{prop:stark2}
\begin{enumerate}
\item The operator $K(F, \theta, z)$ is compact and jointly analytic in $(z, \theta)$
on the region
\beq\label{eq:unif-set1}
\{ ( \theta, z) ~|~ z \in \C, ~~0 < | \Im \theta| < \pi / 3  \}.
\eeq
\item We have the following convergence on the contour $\gamma_N$ with $0 < | \Im \theta | < \pi / 3$:
\beq\label{eq:kernel-conv1}
\| K(F, \theta, z) - K(0, \theta, z) \| = \mathcal{O} ( F N^4 ) ,
\eeq
as $F \rightarrow 0$. This convergence is uniform on the larger
set $\{ ( \theta, z) ~|~ d(z, W(H_0(F, \theta))) > 0, ~0 < | \Im \theta| < \pi / 3  \}$.
\end{enumerate}
\end{proposition}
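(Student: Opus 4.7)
The plan is to treat the two parts separately, using the already-recalled fact that $V(\theta)(H_0(\theta)+1)^{-1}$ is compact and jointly analytic on $|\Im\theta|<\pi/3$, together with the resolvent bounds \eqref{eq:starkest1}--\eqref{eq:starkest2} and the second resolvent identity.

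For part~(1), I would factor
\[
K(F,\theta,z) \;=\; \bigl[V(\theta)(H_0(\theta)+1)^{-1}\bigr]\cdot\bigl[(H_0(\theta)+1)R_F\bigr],
\]
with $R_F := (H_0(F,\theta)-z)^{-1}$, so that only the second factor requires work. Rewriting $H_0(\theta)+1 = (H_0(F,\theta)-z)+(z+1-Fe^{\theta}x_1)$ converts it into $I + (z+1)R_F - Fe^{\theta}x_1 R_F$. Herbst's analysis of the pure Stark operator (the $V=0$ case) provides that on $0<|\Im\theta|<\pi/3$, $H_0(F,\theta)$ is closed on $D(-\Delta)\cap D(x_1)$ with empty spectrum, and yields boundedness of both $x_1 R_F$ and $\Delta R_F$ on $L^2$; the second factor is therefore bounded. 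Joint analyticity of this factor in $(\theta,z)$ follows from the type-A analyticity of $\{H_0(F,\theta)\}$ combined with the emptiness of its spectrum. Part~(1) is then immediate from the product being compact (one factor compact, the other bounded) and jointly analytic.

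For part~(2), the second resolvent identity gives
\[
K(F,\theta,z) - K(0,\theta,z) \;=\; -Fe^{\theta}\, V(\theta)\, R_F\, x_1\, R_0, \qquad R_0 := (H_0(\theta)-z)^{-1}.
\]
The apparent obstruction is the unbounded factor $x_1 R_0$. The essential trick is to commute $x_1$ past $R_F$: since $[H_0(F,\theta),x_1] = -e^{-2\theta}\partial_1$, one has $R_F x_1 = x_1 R_F + e^{-2\theta} R_F \partial_1 R_F$. Substituting,
\[
K(F,\theta,z) - K(0,\theta,z) \;=\; -Fe^{\theta}\bigl[V(\theta)x_1\bigr] R_F R_0 \;-\; Fe^{-\theta}\,V(\theta)\, R_F\, \partial_1\, R_F\, R_0.
\]
The key observation is that $V(\theta)x_1 = e^{-\theta}x_1/|x|$ is a bounded multiplication operator of norm $e^{-\Re\theta}$, so the Coulomb factor absorbs the previously unbounded $x_1$. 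On the contour $\gamma_N$, estimates \eqref{eq:starkest1}--\eqref{eq:starkest2} give $\|R_F\|,\|R_0\| = \mathcal{O}(N^2)$, so the first term is immediately $\mathcal{O}(FN^4)$. The second term is of the same (or smaller) order once one has the bounds $\|V(\theta) R_F\| = \mathcal{O}(N^2)$ and $\|\partial_1 R_F\| = \mathcal{O}(1)$, both of which follow from the identity $-\frac{1}{2}e^{-2\theta}\Delta R_F = I + zR_F - Fe^{\theta}x_1 R_F$ together with Hardy's inequality and Herbst's separate bound on $\|x_1 R_F\|$. Uniformity on the larger set $\{(\theta,z) : d(z,W(H_0(F,\theta)))>0,\,0<|\Im\theta|<\pi/3\}$ is inherited directly from the resolvent bounds, which remain valid there.

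The hard step is controlling the commutator-generated second term with the correct $N$-dependence, since it requires the auxiliary bound $\|\partial_1 R_F\|=\mathcal{O}(1)$ and a growth estimate for $\|V(\theta)R_F\|$; by contrast the first term is elementary once $x_1$ has been absorbed by the Coulomb factor $V(\theta)$, and the explicit prefactor $F$ from the resolvent identity provides the only small parameter needed.
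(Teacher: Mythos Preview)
Your approach is essentially the paper's: for part (1) you factor $K$ as a compact operator times a bounded one, and for part (2) you commute $x_1$ through a resolvent so that it pairs with $V(\theta)$ into the bounded multiplier $e^{-\theta}x_1/|x|$. The only real difference is the ordering in the second resolvent identity. The paper writes $V(\theta)(R_F-R_0)=-Fe^\theta V(\theta)R_0\, x_1\, R_F$ and commutes $x_1$ through $R_0$; you write $V(\theta)(R_F-R_0)=-Fe^\theta V(\theta)R_F\, x_1\, R_0$ and commute through $R_F$. The paper's ordering has the practical advantage that the auxiliary bounds it needs---$\|V(\theta)R_0\|$ and $\|p_1 R_0\|$---involve only the explicit free resolvent and are read off directly from Lemma~\ref{lemma:resolventest1}. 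Your ordering requires instead $\|V(\theta)R_F\|$ and $\|\partial_1 R_F\|$, which must be extracted from the quadratic estimate for the Stark operator.

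One correction: your claim $\|\partial_1 R_F\|=\mathcal{O}(1)$ on $\gamma_N$ is not justified. The quadratic estimate gives $\|\Delta R_F\|=\mathcal{O}(1)$ (for small $F$), but then $\|\partial_1 R_F u\|^2\le \|R_F u\|\,\|\Delta R_F u\|$ only yields $\|\partial_1 R_F\|=\mathcal{O}(N)$. This does not break your argument, since Hardy's inequality together with the same gradient bound gives $\|V(\theta)R_F\|\le C\|\nabla R_F\|=\mathcal{O}(N)$ rather than $\mathcal{O}(N^2)$; the commutator term is then $F\cdot\mathcal{O}(N)\cdot\mathcal{O}(N)\cdot\mathcal{O}(N^2)=\mathcal{O}(FN^4)$ as required.
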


We need the following lemma summarizing several key estimates on resolvents.

\begin{lemma}\label{lemma:resolventest1}
Let $z \in \gamma_N$ and $0 < |\Im \theta| < \pi / 3 $.
\begin{enumerate}
\item $\|(H_0 (F, \theta) - z)^{-1} \| = \mathcal{O}(N^2)$

\item $\|(H_0 ( \theta)-z)^{-1} \| = \mathcal{O}(N^2)$

\item $\|V(\theta)(H_0 ( \theta)-z)^{-1} \| = \mathcal{O}(N)$

\item $\|e^{-2 \theta} p_1 (H_0 ( \theta)-z)^{-1} \| = \mathcal{O}(N^2)$, where $p_1 = -i \partial / \partial x_1$.

\end{enumerate}
\end{lemma}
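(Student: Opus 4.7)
The plan is to handle items (1)--(2) by the numerical-range bound from Proposition \ref{prop:stark1}, and items (3)--(4) by reducing each resolvent to a Fourier multiplier after absorbing the dilation factors. The unifying point is that the contour $\gamma_N$ sits at distance of order $1/N^2$ from the spectrum (respectively, numerical range) of the relevant unperturbed operator, so each factor of the resolvent costs $N^2$.

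For (1), I would write $H_0(F,\theta) = e^{-2 \theta}[ - (1/2) \Delta + F e^{3 \theta} x_1 ]$, apply Proposition \ref{prop:stark1} to the bracket (whose numerical range is a half-plane whenever the effective field argument $3 \Im \theta$ is admissible), then multiply by $e^{-2 \theta}$ and invoke the basic resolvent bound \eqref{eq:stark-res1}. Since $\Re z \approx -1/(2N^2)$ while the rotated half-plane lies off in a direction making a fixed nonzero angle with the negative real axis, a short geometric computation gives $d(z, W(H_0(F,\theta))) \geq c(\theta)/N^2$, yielding (1). For (2), the spectrum of $H_0(\theta)$ is the ray $e^{-2 \theta}[0, \infty)$, and the analogous computation with $w := e^{2 \theta} z \approx -e^{2\theta}/(2N^2)$ gives $d(z, \sigma(H_0(\theta))) \geq c/N^2$, since $\Im \theta \neq 0$ forces $|\Im w| \sim 1/N^2$.

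For (3) and (4), I would exploit the identity
\[
(H_0(\theta) - z)^{-1} = 2 e^{2 \theta} (-\Delta + w)^{-1}, \qquad w := -2 z e^{2 \theta},
\]
with $|w|$ and $|\Im w|$ both of order $1/N^2$ by the computation just used for (2). For (4), this gives $e^{-2 \theta} p_1 (H_0(\theta) - z)^{-1} = 2 p_1 (-\Delta + w)^{-1}$, a Fourier multiplier with symbol $2 \xi_1 / (|\xi|^2 + w)$ whose supremum I would bound by splitting the frequency range into $|\xi| \lesssim 1/N$, where $\lvert |\xi|^2 + w \rvert \gtrsim |\Im w| \sim 1/N^2$, and $|\xi| \gtrsim 1/N$, where $\lvert |\xi|^2 + w \rvert \gtrsim |\xi|^2$, obtaining $\mathcal{O}(N)$, comfortably within the claimed $\mathcal{O}(N^2)$. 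For (3), I would factor
\[
V(\theta) (H_0(\theta) - z)^{-1} = 2 e^{\theta} \, |x|^{-1} (-\Delta)^{-1/2} \cdot (-\Delta)^{1/2} (-\Delta + w)^{-1};
\]
Hardy's inequality $\| |x|^{-1} \phi \|_2 \leq 2 \| \nabla \phi \|_2$ on $\R^3$ bounds the first factor by a constant, and the second is the multiplier $|\xi|/(|\xi|^2 + w)$, of order $N$ by the same frequency split, giving (3).

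The main obstacle I foresee is the uniform lower bound $|\Im(-2 z e^{2 \theta})| \geq c/N^2$ that drives every step: it underlies the numerical-range estimates in (1)--(2) and the small-frequency regime of the Fourier multiplier splits in (3)--(4). I would obtain it by noting that $z \in \gamma_N$ lies within $1/(8 N^3)$ of the real number $-1/(2 N^2)$, while $0 < \Im \theta < \pi/3$ keeps $e^{2 \theta}$ off the real axis, so $|\Im(e^{2 \theta} z)|$ is at least $\sin(2 \Im \theta)/(2 N^2)$ up to an $O(1/N^3)$ correction, hence of order $1/N^2$ uniformly for $\Im \theta$ in any compact subinterval of $(0, \pi/3)$ and all sufficiently large $N$.
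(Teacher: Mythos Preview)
Your argument is correct. Items (1), (2), and (4) match the paper's proof essentially line for line: the numerical-range bound from Proposition~\ref{prop:stark1} for (1), the distance to the rotated half-line $e^{-2i\Im\theta}\R^+$ for (2), and the Fourier-multiplier estimate for (4). Your frequency split in (4) in fact yields $\mathcal{O}(N)$, sharper than the paper's stated $\mathcal{O}(N^2)$, but either suffices.

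Item (3) is where your route genuinely diverges. The paper does not use Hardy's inequality; instead it splits the Coulomb potential as $V = V\chi_{B_N(0)} + V(1-\chi_{B_N(0)}) \equiv V_2 + V_\infty$ with $\|V_2\|_{L^2} \sim N^{1/2}$ and $\|V_\infty\|_{L^\infty} = 1/N$, and combines this with a scaled Sobolev bound $\|\psi\|_\infty \leq C\lambda^{-1/2}\|\Delta\psi\| + C\lambda^{3/2}\|\psi\|$, taking $\lambda = 1/N$ and $\psi = (H_0(\theta)-z)^{-1}\phi$. Your factorization $|x|^{-1}(-\Delta)^{-1/2}\cdot(-\Delta)^{1/2}(-\Delta+w)^{-1}$ with Hardy's inequality is cleaner and isolates exactly the single power of $N$ coming from the multiplier $|\xi|/(|\xi|^2 + w)$; the paper's approach is more in the Kato--Rellich spirit and would adapt to potentials other than the pure Coulomb one, but requires keeping track of the $L^2$/$L^\infty$ norms under the $N$-dependent cutoff. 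Both arguments rely on the same lower bound $|\Im w| \gtrsim 1/N^2$ that you correctly identify as the crux.
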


\begin{proof}
\noindent
1. The first estimate follows from the bound
\eqref{eq:stark-res1} and the fact that the numerical range is a half-plane located a distance $\mathcal{O}(1 / N^2)$
from the contour $\gamma_N$.

\noindent
2. The second estimate follows similarly as the spectrum of $H_0 (\theta)$ is the half line $e^{-
2 \Im \theta i} \R^+$, see \eqref{eq:starkest1}.

\noindent
3. The third estimate requires the following bound. Let $C_3$ denote the constant
$C_3 = (2 \pi)^{- 3/2} ( \int_{\R^3} (1 + | p|^2)^{-2} ~d^3 p )^{1/2}$.
For any $\psi \in H^2 (\R^3)$, and for any $\lambda > 0$, we have
\beq\label{eq:sobolev1}
\| \psi \|_\infty \leq \frac{C_3}{\lambda^{1/2}} \| \Delta \psi \| + C_3 \lambda^{3/2} \| \psi \| .
\eeq
This follows from the Sobolev embedding theorem and standard estimates with the Fourier transform.
We decompose the Coulomb potential as $V = V \chi_{B_N(0)} +  V(1 - \chi_{B_N(0)}) \equiv
 V_2 + V_\infty$, where $\chi_{B_N(0)}$ is the characteristic function on the ball of radius $N > 0$ centered
at the origin. We have that $V_2 \in L^2 (\R^3)$ and $V_\infty \in L^\infty (\R^3)$,
with $\| V_2 \| = \omega_3^{1/2} N^{1/2}$ and $\| V_\infty \|_\infty = 1 / N$. With the help of
\eqref{eq:sobolev1}, and choosing $\lambda = 1/N$, we write
\bea\label{eq:sobolev2}
\| V \psi \| & \leq  & \|V_2 \| \| \psi \|_\infty + \| V_\infty \|_\infty \| \psi \| \nonumber \\
 &\leq & ( \omega_3 N)^{1/2} C_3  \| \Delta \psi \| + ( \omega_3^{1/2} C_3 + 1) N^{-1} \| \psi\|,
\eea
where $\omega_3 = 4 \pi$.
Recall that $(H_0 (\theta) - z)^{-1} : L^2 (\R^3) \rightarrow H^2 (\R^3)$ and that $|z| = \mathcal{O}(N^{-2})$
for $z \in \gamma_N$. Taking estimate \eqref{eq:sobolev2} with $\psi =
(H_0 (\theta) - z)^{-1} \phi$, for any $\phi \in L^2 (\R^3)$,
together with estimate (2), we easily obtain estimate (3).

\noindent
4. The proof of estimate (4) follows from
\beq\label{eq:p-est1}
\| e^{-2\theta} p_1 (H_0 (\theta) - z)^{-1} \| \leq \max \left\{ | p_1 ( |p|^2 - e^{2\theta} z)^{-1}
\right\} = \mathcal{O}(N^2) ,
\eeq
since $z \in \gamma_N$.

\end{proof}

We can now give the proof of Proposition \ref{prop:stark2}.

\begin{proof}
\noindent
1. As in Herbst \cite{herbst1}, we write
\beq\label{eq:stark-kernel1}
K(F, \theta, z) = V(\theta) (- \Delta +1 )^{-1} J(\theta, z),
\eeq
where
\beq\label{eq:stark-kernel2}
J(\theta, z) = (-\Delta + 1)(H_0(F, \theta) - z)^{-1} .
\eeq
The Coulomb potential has the property that $V(\theta) (- \Delta +1 )^{-1}$ is an analytic, compact operator-valued
function for any $\theta \in \C$.
The quadratic estimate \eqref{eq:QE2} implies
that $J(\theta, z)$ is bounded. For an appropriately defined circle $\gamma$, a contour integral representation
\beq\label{eq:J-contour1}
J(\theta, z) = (2 \pi i)^{-2} \int_\gamma \int_\gamma ~dw ~d\phi (w-z)^{-1} (\phi - \theta)^{-1} J( \phi, w)
\eeq
is used to verify that $J(\theta, z)$ is analytic in the region described in the proposition: $0 < | \Im \theta| < \pi /3 $
and $z \in \C$.

\noindent
2. Using the resolvent formula, we write the difference on the left in \eqref{eq:kernel-conv1} as
\bea\label{eq:resolvest1}
\lefteqn{K(F,  \theta, z)- K(0,\theta, z) } \nonumber \\
  & = &  - F e^\theta x_1 V(\theta) (H_0(\theta) -z)^{-1} (H_0(F, \theta) - z)^{-1}  \nonumber \\
 & & + F e^\theta V(\theta) (H_0(\theta) - z)^{-1} [ H_0(\theta) , x_1] (H_0(\theta) -z)^{-1} (H_0(F, \theta) - z)^{-1}. \nonumber \\
 & &
 \eea
The commutator $[ H_0(\theta) , x_1] = - 2 e^{-2\theta}i p_1$, where $p_1 = -i \partial / \partial x_1$. Note that
$\| x_1 V(\theta) \| \leq e^{- \Re \theta}$, since the potential is Coulombic. From
the resolvent estimates in Lemma \ref{lemma:resolventest1}, we obtain
\beq\label{eq:conv2}
\| K( F, \theta, z) - K( \theta, z) \|  \leq  F e^{\Re \theta} \| (H_0 (F, \theta) - z)^{-1} \| \{ A + B \} ,
\eeq
where
\beq\label{eq:conv2-1}
A = e^{-\Re \theta} \| ( H_0 (\theta) - z)^{-1} \| = \mathcal{O}(N^2),
\eeq
and
\beq\label{eq:conv2-2}
B = \| V(\theta) (H_0(\theta) -z)^{-1} \| + 2 \| e^{-2 \theta} p_1 (H_0(\theta) -z)^{-1}\| = \mathcal{O}(N^2).
\eeq
Consequently, we find from \eqref{eq:conv2}--\eqref{eq:conv2-2} that
\beq\label{eq:resolventest1-2}
 \| K(F,  \theta, z)- K(0, \theta, z) \| =  \mathcal{O}(F N^4).
 \eeq
This proves part (2) of the proposition.
\end{proof}

We recall that our $F$ is $F h^4 \epsilon (h)$ so that  with $h = 1/N$ and $\epsilon (h) = h^{K+\delta}$,
part (2) of Proposition \ref{prop:stark2} states that uniformly for $z \in \gamma_N$, with $| \gamma_N| = 2 \pi (8 N^3 )^{-1}$, we have
\beq\label{eq:kernel-conv1-2}
\| K(F, \theta, z) - K(0, \theta, z) \| = \mathcal{O} (  N^{-K - \delta} ) ,
\eeq
as $N \rightarrow \infty$.

%
%

\section{A semiclassical trace identity for resonance clusters}\label{sec:trace1}

We now return to the scaled Hamiltonian $S_h (F) = H_V + W_h (F)$,
with $F > 0$ fixed, the perturbation $W_h(F) = h^4 \epsilon (h) F
$, with $\epsilon (h) = h^{K + \delta}$, and $K \geq{6}$. We will
take $h = 1/N$ and consider $N \rightarrow \infty$. We need a
basic trace identity relating the resonance shifts $z_{N,j} (1/N,
F) - E_N (1/N)$, with $E_N(1/N) = -1/2$, to the eigenvalues of a
reduced, finite dimensional matrix obtained from the Stark
perturbation $W_h(F)$. This is the main result of this section
stated in Theorem \ref{thm:trace1}.

In section \ref{sec:resonance1}, the dilation was written as $e^{\theta}$. The real part of $\theta$ does not
affect Theorems \ref{th:analyticity1} and \ref{th:reson1}. Consequently, \emph{we will now write the dilation as
$e^{i \theta}$, with $\theta \in \R$ and in the range $0 < |\theta| < \pi / 3 $.} The operators are obtained by analytic continuation
as discussed in section \ref{sec:resonance1}.

As above, we fix $N \in \N$.
%
%
We study the non self-adjoint operator $S_h(F, \theta)$ obtained from $S_h(F)$ in (\ref{eq:rescale1}) by dilation $D_{exp(i
\theta)}$, for $\theta \in \R$ as above:
\beq\label{eq:rescale2}
S_h (F, \theta) = D_{exp(i \theta)} S_h(F)  D_{exp(- i \theta)} = H_V(\theta)
+ W_h (F, \theta),
\eeq
where the dilated, scaled hydrogen atom
Hamiltonian is
\beq\label{eq:dilhydro1}
H_V(\theta) =
D_{exp(i \theta)}H_V D_{exp(- i \theta)} = - \frac{e^{-2 i \theta}}{2}
\Delta - \frac{e^{- i \theta}}{|x|},
\eeq
and the dilated perturbation is
\beq\label{eq:pert1}
W_h (F, \theta) =  D_{exp(i \theta)} W_h(F)
D_{exp(- i \theta)} = h^4 \epsilon (h) e^{ i \theta} F x_1.
\eeq
We write
$W_h(F) = W_h(F, 0)$
and note that $W_h (F) = h^4 \epsilon (h) F x_1$ is self-adjoint.

Let $\Pi_N^{0}$ be the orthogonal
projector for the eigenvalue $\tilde{E}_N = - 1 / (2 N^2)$ of the
scaled hydrogen atom Hamiltonian $H_V$ defined in
(\ref{eq:rescale1}).
Under dilation, these remain eigenvalues of $H_V(\theta)$.
Let $P_N(\theta)$ be the projector
for the resonance cluster $\{ \tilde{z}_{N,i} (h,F) \}$ near
$\tilde{E}_N$ of the non self-adjoint operator $S_h (F, \theta)$.
We write for the resonance shift
\beq\label{eq:shift1}
\tilde{z}_{N, i}(1/N, F) = \tilde{E}_N + \nu_{N,i}, ~~\nu_{N,i} \in \C .
\eeq
The following trace estimate for the resonance shifts is the main result of this section.

\begin{theorem}\label{thm:trace1}
Let $\nu_{N,i}$ be the complex resonance shifts defined
in (\ref{eq:shift1}), and let $\tau_{N,i}$
be the eigenvalues of the self-adjoint operator $\Pi_N^0 W_h(F ) \Pi_N^0$.
For any $m \in \N$, and for $h = 1/N$, we have the following trace formula:
\beq\label{eq:trace2}
\frac{1}{d_N} \sum_{i}^{d_N} \left( \frac{ \nu_{N , i} }{ h^2 \epsilon (h) } \right)^m =
\frac{1}{d_N} \sum_{i = 1}^{d_N} \left( \frac{ \tau_{N, i} }{ h^2 \epsilon (h) } \right)^m +
\mathcal{O} \left( \frac{1}{N^\beta}
\right) ,
\eeq
for some $\beta > 0$.
\end{theorem}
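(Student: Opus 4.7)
The plan is to start from the Riesz identity
\[
\sum_{i=1}^{d_N}\nu_{N,i}^m \;=\; \text{Tr}\bigl[(S_h(F,\theta)-\tilde{E}_N)^m P_N(\theta)\bigr] \;=\; \text{Tr}\!\left[\frac{1}{2\pi i}\oint_{\gamma_N}(z-\tilde{E}_N)^m(z-S_h(F,\theta))^{-1}dz\right],
\]
in which the integrated operator is finite-rank since $\gamma_N$ encloses only the resonance cluster of $S_h(F,\theta)$. One inserts the Neumann expansion
\[
(z-S_h(F,\theta))^{-1} \;=\; \sum_{k=0}^{M-1}\bigl[(z-H_V(\theta))^{-1}W_h(F,\theta)\bigr]^{k}(z-H_V(\theta))^{-1} + \mathcal{R}_M(z),
\]
and reads the contributions order by order via residue calculus. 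Writing $(z-H_V(\theta))^{-1}=(z-\tilde{E}_N)^{-1}\Pi_N(\theta)+Q_N(z)$, with $\Pi_N(\theta)$ the dilated hydrogen projector and $Q_N$ holomorphic inside $\gamma_N$, the $k$-th Neumann term carries at most a pole of order $k+1$ at $\tilde{E}_N$.

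First I would show that every Neumann term with $k<m$ integrates to zero because there are not enough singular factors to compensate $(z-\tilde{E}_N)^m$, and that for $k=m$ only the fully singular configuration survives,
\[
\frac{1}{2\pi i}\oint_{\gamma_N}\frac{(z-\tilde{E}_N)^m}{(z-\tilde{E}_N)^{m+1}}\bigl(\Pi_N(\theta)W_h(F,\theta)\bigr)^m\Pi_N(\theta)dz \;=\; \bigl(\Pi_N(\theta)W_h(F,\theta)\Pi_N(\theta)\bigr)^m .
\]
Taking traces and using dilation analyticity to deform $\theta\to 0$ reduces this main term to $\sum_i\tau_{N,i}^m$. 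The remaining items are the sub-leading $k=m$ residues that contain at least one $Q_N(z)$-factor, the terms with $k\ge m+1$, and the Neumann remainder $\mathcal{R}_M$; each must be shown to contribute $\mathcal{O}(N^{-\beta})$ after normalisation by $d_N(h^2\epsilon(h))^m$.

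For the book-keeping one notes that $|\gamma_N|=\mathcal{O}(N^{-3})$, that $\|(z-H_V(\theta))^{-1}\|=\mathcal{O}(N^3)$ on $\gamma_N$ (controlled by the $\mathcal{O}(N^{-3})$ gap of the hydrogen spectrum), and that the coherent-state localisation of the scaled hydrogen eigenspace gives $\|x_1\Pi_N^0\|=\mathcal{O}(N^2)$, hence $\|W_h(F,\theta)\Pi_N(\theta)\|=\mathcal{O}(h^2\epsilon(h)F)=\mathcal{O}(N^{-K-2-\delta})$. Combined with $d_N=N^2$ and $(h^2\epsilon(h))^m=N^{-m(K+2+\delta)}$, each additional factor of $W_h$ beyond $k=m$ contributes a gain of at least $N^{-(K-1-\delta)}$, so that the accumulated error is bounded by $\mathcal{O}(N^{-(K-1-\delta)})$; with $K\ge 6$ this gives the announced $\beta>0$.

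The main obstacle is that $W_h(F,\theta)=h^4\epsilon(h)e^{i\theta}Fx_1$ is unbounded, so the Neumann series above is only formal at the level of operator norms. To make the expansion and all remainder estimates rigorous, I would rewrite each resolvent using Herbst's factorisation $(H_V(F',\theta)-z)^{-1}=(H_0(F',\theta)-z)^{-1}(I-K(F',\theta,z))^{-1}$ with $F'=h^4\epsilon(h)F$, and then invoke the kernel estimate \eqref{eq:kernel-conv1-2}, $\|K(F',\theta,z)-K(0,\theta,z)\|=\mathcal{O}(N^{-K-\delta})$, together with the resolvent bounds of Lemma~\ref{lemma:resolventest1}. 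In this form the perturbation enters only through bounded operators, and the unboundedness of $x_1$ is used exclusively in expressions already sandwiched by $\Pi_N(\theta)$, where the $\mathcal{O}(N^2)$ localisation estimate takes over. A secondary technical point is to verify that the sub-leading $k=m$ residues containing $Q_N(z)$-factors integrate to quantities of the required size; this uses that $W_h(F,\theta)$ composed with $\Pi_N(\theta)$ followed by $Q_N(\tilde{E}_N)$ stays bounded because the closest eigenvalue of $H_V(\theta)$ to $\tilde{E}_N$ lies a distance $\gtrsim N^{-3}$ away.
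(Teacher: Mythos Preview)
Your residue--Neumann scheme is a natural strategy and would go through verbatim for a bounded perturbation, but the unboundedness of $W_h$ is not a technicality that Herbst's factorisation can absorb in the way you suggest. The difficulty is structural: after splitting $(z-H_V(\theta))^{-1}=(z-\tilde E_N)^{-1}\Pi_N(\theta)+Q_N(z)$, the sub-leading terms at order $k\ge m$ contain strings such as $Q_N(z)\,W_h\,Q_N(z)$, and these are genuinely unbounded because $Q_N(z)$ maps only into $H^2(\R^3)$ while $x_1$ is unbounded there. Substituting Herbst's identity $(z-H_V(\theta))^{-1}=(z-H_0(\theta))^{-1}(1+K_0(\theta,z))^{-1}$ does not help, since one is then left with $x_1(z-H_0(\theta))^{-1}$, which is equally unbounded. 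The quadratic estimate of Appendix~\ref{appendix:QE} controls $x_1(H_0(h,F,\theta)-z)^{-1}$ only for the \emph{Stark} free resolvent, and that operator has empty spectrum---so it cannot supply the pole $(z-\tilde E_N)^{-1}\Pi_N(\theta)$ your residue argument needs. In other words, the two ingredients you require simultaneously in the mixed $Q_N$-terms (a bounded $x_1\cdot(\text{resolvent})$ and a pole at $\tilde E_N$) are mutually exclusive.

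The paper avoids this by never forming a Neumann expansion in $W_h$. It uses the Herbst factorisation exactly once, to prove $\|P_N(\theta)-\Pi_N^0(\theta)\|=\mathcal{O}(N^{6-K-\delta})$ (Lemma~\ref{lemma:proj1}), and then proceeds algebraically. The key device is the identity $\Pi_N^0(\theta)(S_h(F,\theta)-\tilde E_N)=\Pi_N^0(\theta)W_h(F,\theta)$, coming from $(H_V(\theta)-\tilde E_N)\Pi_N^0(\theta)=0$; together with its companion $(S_h-\tilde E_N)P_N\Pi_N^0=P_N W_h\Pi_N^0$, it guarantees that every occurrence of $W_h$ in the reduction of $\Pi_N^0(S_h-\tilde E_N)^mP_N$ to $(\Pi_N^0 W_h\Pi_N^0)^m$ appears with a $\Pi_N^0(\theta)$ immediately adjacent (Lemma~\ref{lemma:pretrace1}). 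The coherent-state bound $\|\Pi_N^0(\theta)W_h(F,\theta)\|=\mathcal{O}(h^2\epsilon(h))$ of Lemma~\ref{lemma:moments1} then controls all terms, and no unbounded factor is ever exposed. If you want to retain the contour-integral viewpoint, the clean route is to use it only for the projector comparison and then switch to this operator-algebraic reduction.
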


The proof of Theorem \ref{thm:trace1} requires two main steps. In the first, we express the
left side of (\ref{eq:trace2}) in terms of the trace of the operator
$P_N (\theta) ( S_h (F, \theta) - \tilde{E}_N )^m P_N (\theta)$. In the second step,
we evaluate the trace of this operator and express it in terms of the finite-rank
operator $\Pi_N^0 W_h(F ) \Pi_N^0$.


\subsection{Step 1. A trace calculation.}

Since $S_h (F, \theta)$ is non self-adjoint, the projector $P_N(\theta)$ is not self-adjoint.
The range of $P_N(\theta)$ is a finite-dimensional subspace $\mathcal{E}_N$ with a dimension
$N^2$ that is equal to the geometric multiplicity of the eigenvalue $\tilde{E}_N$ of $H_V$.
Let $\tilde{z}_{N,j}(h,F)$, $j=1, \ldots, K$, with $1 \leq K \leq N^2$
be a listing of the \textbf{distinct} resonances
that converge to $\tilde{E}_N$ as $N \rightarrow \infty$. Let $P_{N,j} (\theta)$
be the projector onto the generalized eigenspace $\mathcal{E}_{N,j}$ corresponding to the
resonance $\tilde{z}_{N,j}(h,F)$. The subspace $\mathcal{E}_N$
admits a direct sum decomposition $\mathcal{E}_N = \oplus_{j=1}^K \mathcal{E}_{N,j}$,
where the finite-dimensional subspaces $\mathcal{E}_{N,j}, ~j=1, \ldots, K$ have the
following properties:
\begin{enumerate}
\item $\mathcal{E}_{N,j} = \mbox{ran} ~ P_{N,j}(\theta)$ and $\mbox{dim} \mathcal{E}_{N,j} = m_j$
\item $P_N(\theta) = \sum_{j=1}^K P_{N,j}(\theta)$
\item the algebraic multiplicity of the resonance $z_{N,j}$ is given by $m_j$
\item $N^2 = \sum_{j=1}^K m_j$
\item the projectors satisfy $P_{N,j} (\theta ) P_{N,m}(\theta)
= \delta_{jm} P_{N,j} (\theta)$
\item on the invariant subspace $\mathcal{E}_{N,j}$,
we have $S_h(F, \theta) | \mathcal{E}_{N,j} = \tilde{z}_{N,j} I_{\mathcal{E}_{N,j}} +
D_{N,j}$, where $D_{N,j}$ is nilpotent with order $m_j$ and commutes with $S_h(F , \theta)$
\item $\ker( S_h (F, \theta) - \tilde{z}_{N,j} )^{m_j} = \mathcal{E}_{N,j}$.
\end{enumerate}
We refer to Kato \cite[chapter III, section 6.5]{kato} for proofs of all these properties.

The main part of the proof of step 1 is the evaluation of the trace
\beq\label{eq:trace0}
Tr \left( P_N (\theta) ( S_h (F, \theta) - \tilde{E}_N )^m P_N (\theta) \right) .
\eeq
Using the facts listed above, we find
\bea\label{eq:trace01}
Tr \left( P_N (\theta) ( S_h (F, \theta) - \tilde{E}_N )^m P_N (\theta) \right)
 &=& \sum_{j=1}^K Tr \left( P_{N,j} (\theta) ( S_h (F, \theta) - \tilde{E}_N )^m P_{N,j} (\theta) \right)
   \nonumber \\
   & = & \sum_{j=1}^K Tr \left( P_{N,j} (\theta) ( \tilde{z}_{N,j} (h,F)
   - \tilde{E}_N + D_{N,j} )^m P_{N,j} (\theta) \right)
    \nonumber \\
 &=& \sum_{j=1}^K  ( \tilde{z}_{N,j} (h,F) - \tilde{E}_N )^m Tr P_{N,j} \nonumber \\
 &=&  \sum_{j=1}^K  m_j ( \tilde{z}_{N,j} (h,F) - \tilde{E}_N )^m  \nonumber \\
 &=& \sum_{j=1}^K  m_j ( \nu_{N,j})^m .
\eea
We also used the facts that $Tr P_{N,j} (\theta) = m_j$, and that $Tr P_{N,j} (\theta)
D_{N,j} = 0$, since $D_{N,j}$ is nilpotent and $D_{N,j}P_{N,j} = P_{N,j} D_{N,j} = D_{N,j}$.

The distinct
complex resonance shifts $\nu_{N,j}$ are defined in (\ref{eq:shift1}). We now change notation
and list the shifts with multiplicity included. That is, $\nu_{N,j}$ is listed $m_j$ times.
It then follows directly from (\ref{eq:trace01}) that
\beq\label{eq:trace1}
Tr \left( \frac{ P_N (\theta) ( S_h (F, \theta) -
 \tilde{E}_N )^m P_N (\theta) }{ (h^2 \epsilon (h))^m } \right)
= \sum_{i=1}^{d_N} \left( \frac{ \nu_{N,i}}{h^2 \epsilon (h) } \right)^m .
\eeq


\subsection{Step 2. Evaluation of the trace.}

The second step in the proof of Theorem \ref{thm:trace1} is to estimate the trace on the left side
of (\ref{eq:trace1}). This requires that we replace $P_N(\theta)$ by $\Pi_N^0(\theta)$ and that
we control the perturbation $W_h(F, \theta)$ defined in (\ref{eq:pert1}).

In order to replace the projector $P_N(\theta)$ by $\Pi_N^0(\theta)$,
we need some results relating the projector $\Pi_N^0 (\theta)$ to $P_N (\theta)$
as $N \rightarrow \infty$, which means that $h = 1/N \rightarrow 0$.

\begin{lemma}\label{lemma:proj1}
For fixed $\theta \in \R$ with $0 < \theta < \pi / 3$,
we have
\beq\label{eq:proj-close1}
\| \Pi_N^0 ( \theta) - P_N (\theta) \| = \mathcal{O}( N^{6- K- \delta}),
\eeq
for $K \geq 3$ and $\delta > 0$.
Consequently, if $P^\perp \equiv 1 - P$, we have
\begin{enumerate}
\item $ \| (\Pi_N^0)^\perp (\theta) P_N(\theta) \| = \mathcal{O}( N^{6- K - \delta})$,
\item  $ \| P_N^\perp (\theta) \Pi_N^0 ( \theta) \| = \mathcal{O}( N^{6- K - \delta})$.
\end{enumerate}
\end{lemma}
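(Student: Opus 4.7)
My plan is to write both projectors as Riesz contour integrals along $\gamma_N$, the circle of radius $(8N^3)^{-1}$ about $\tilde E_N = -1/(2N^2)$ from Section~\ref{sec:resonance1}, so that
\[
\Pi_N^0(\theta) - P_N(\theta) = \frac{1}{2\pi i}\oint_{\gamma_N}\bigl[(H_V(\theta) - z)^{-1} - (S_h(F,\theta) - z)^{-1}\bigr]\,dz,
\]
and, since $|\gamma_N| = \mathcal{O}(N^{-3})$, reduce the problem to a uniform bound on the integrand on $\gamma_N$.

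The central strategic move, which avoids dealing with the unbounded perturbation $W_h(F,\theta) = h^4\epsilon(h)Fe^{i\theta}x_1$ head-on, is the Herbst factorization already used in Proposition~\ref{prop:stark2}. Writing $R_0(z) = (H_0(\theta)-z)^{-1}$, $\tilde R_0(z) = (H_0(\theta)+W_h(F,\theta)-z)^{-1}$, $K = V(\theta)R_0$, and $\tilde K = V(\theta)\tilde R_0$, one has
\[
(H_V(\theta)-z)^{-1} = R_0(z)[1-K]^{-1}, \qquad (S_h(F,\theta)-z)^{-1} = \tilde R_0(z)[1-\tilde K]^{-1},
\]
which relegates the entire Stark-field dependence to the bounded compact operator $\tilde K$; the rescaled estimate~\eqref{eq:kernel-conv1-2} then supplies $\|\tilde K - K\| = \mathcal{O}(N^{-K-\delta})$ uniformly on $\gamma_N$. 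Adding and subtracting decomposes the integrand into
\[
(R_0 - \tilde R_0)[1-K]^{-1} \;+\; \tilde R_0\,[1-\tilde K]^{-1}(\tilde K - K)[1-K]^{-1}.
\]

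Each factor is then estimated by inputs already in hand: $\|R_0\|, \|\tilde R_0\| = \mathcal{O}(N^2)$ on $\gamma_N$ from Lemma~\ref{lemma:resolventest1}(1)--(2); the norms $\|[1-K]^{-1}\|$ and $\|[1-\tilde K]^{-1}\|$ on $\gamma_N$ are controlled by the simple-pole structure at $\tilde E_N$ (whose residue is $V(\theta)\Pi_N^0(\theta)$, derivable from the factorization $R_{H_V} = R_0[1-K]^{-1}$ applied to the Laurent expansion of $R_{H_V}$ near $\tilde E_N$); and the $\tilde K - K$ bound above. The main obstacle is the first summand, in which $R_0 - \tilde R_0 = R_0 W_h \tilde R_0$ reintroduces the unbounded $x_1$. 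This is overcome by exploiting that the multiplication operator $V(\theta)x_1 = e^{-i\theta}x_1/|x|$ is bounded of norm $\leq 1$, together with the commutator identity $[H_0(\theta), x_1] = -ie^{-2i\theta}p_1$ and the $p_1$-resolvent bound of Lemma~\ref{lemma:resolventest1}(4), to rewrite $R_0 W_h \tilde R_0$ so that every stray $x_1$ is either absorbed into a bounded $V x_1$ factor or converted by commutation into a $p_1$. Multiplying by $|\gamma_N| = \mathcal{O}(N^{-3})$ and tracking powers of $N$ produces the $\mathcal{O}(N^{6-K-\delta})$ bound.

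Parts (1) and (2) are immediate: since $\Pi_N^0(\theta)^2 = \Pi_N^0(\theta)$ and $P_N(\theta)^2 = P_N(\theta)$, we have $(1-\Pi_N^0)P_N = (1-\Pi_N^0)(P_N - \Pi_N^0)$ and $(1-P_N)\Pi_N^0 = (1-P_N)(\Pi_N^0 - P_N)$; moreover $\|\Pi_N^0(\theta)\|$ and $\|P_N(\theta)\|$ are $\mathcal{O}(1)$ uniformly in $N$ by the crude contour bound $\|\Pi_N^0\| \leq (2\pi)^{-1}|\gamma_N|\sup_{\gamma_N}\|(H_V(\theta)-z)^{-1}\|$ and its analogue for $P_N$, so both perpendicular norms inherit the $\mathcal{O}(N^{6-K-\delta})$ bound from the main estimate.
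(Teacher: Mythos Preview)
Your strategy is essentially that of the paper: represent both projectors by contour integrals over $\gamma_N$, use the Herbst factorization $R_{H_V}=R_0[1-K]^{-1}$ (and its Stark analogue), subtract, split the difference into pieces, and estimate each piece using Proposition~\ref{prop:stark2} and Lemma~\ref{lemma:resolventest1}. The derivation of parts (1) and (2) from the main bound is also exactly as in the paper.

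There is, however, a genuine gap in your treatment of the first summand $(R_0-\tilde R_0)[1-K]^{-1}$. After discarding the analytic piece you are left (up to constants) with $R_0\,x_1\,\tilde R_0\,V R_{H_V}$, and you propose to absorb $x_1$ into $V$ by commutation. But commuting $x_1$ to the right through $\tilde R_0$ to meet $V$ produces a term containing $p_1\tilde R_0$, and Lemma~\ref{lemma:resolventest1}(4) bounds $p_1 R_0$ (the \emph{free} resolvent), not $p_1\tilde R_0$ (the \emph{Stark} resolvent); commuting $x_1$ to the left through $R_0$ instead leaves a bare $x_1$ on the far left with nothing to absorb it. The ``$Vx_1$ bounded plus commutator'' mechanism, which the paper does use in the proof of Proposition~\ref{prop:stark2} where $x_1$ already sits adjacent to $V$, simply does not close here. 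The paper resolves this differently: it invokes Herbst's quadratic estimate (Appendix~\ref{appendix:QE}) to obtain directly $\|x_1(H_0(h,F,\theta)-z)^{-1}\|=\mathcal{O}(1)$ on $\gamma_N$; see the treatment of term~II, equations \eqref{eq:QE1}--\eqref{eq:estII-1}.

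A second, smaller point: your control of $\|[1-K]^{-1}\|$ and $\|[1-\tilde K]^{-1}\|$ via ``simple-pole structure'' needs to be made quantitative before the final exponent can be read off. The paper obtains the precise bound $\mathcal{O}(N^3)$ from the explicit identity $(1+K_0)^{-1}=1-V(\theta)R_{H_V}(\theta)$ together with the relative $H_0$-bound of $V$ (see \eqref{eq:K0-2}--\eqref{eq:K0-3}), and then transfers this to $(1+K_h)^{-1}$ by a Neumann-series argument based on $\|K_h-K_0\|=\mathcal{O}(N^{-K-\delta})$ (see \eqref{eq:KF-1}--\eqref{eq:KF-3}). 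It is the combination of these $\mathcal{O}(N^3)$ factors with $|\gamma_N|=\mathcal{O}(N^{-3})$ that yields the exponent $6-K-\delta$.
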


\begin{proof}
\noindent
1. We consider the contour $\gamma_N$, a circle of radius
$1 / (8 N^3) > 0$ about the eigenvalue $\tilde{E}_N$. We write the
projectors as contour integrals
\beq\label{eq:contour1}
\Pi_N^0 (\theta) - P_N (\theta) = \frac{1}{2 \pi i} \int_{\gamma_N} ~dz [ (z
- H_V(\theta) )^{-1} - (z- S_h(F, \theta))^{-1} ] .
\eeq
Recall that $S_h(F, \theta) = H_V(\theta) + W_h(F, \theta)$,
and that $H_0(h,F , \theta) = -(1/2) e^{-2 i \theta} \Delta
+ W_h(F, \theta)$. We define a kernel (as in section \ref{sec:resonance1})
\beq\label{eq:kernel-res1}
K_h(F, \theta, z) = V(\theta) ( H_0(h, F, \theta) - z)^{-1}.
\eeq
From the second resolvent identity for $H_0 (h,F,\theta)$ and
$S_h(F, \theta)$,
we obtain
\beq\label{eq:resolv1}
(z- S_h(F, \theta))^{-1} = (z - H_0(h, F, \theta))^{-1} ( 1+ K_h (F, \theta, z))^{-1},
\eeq
whenever the inverse on the right exists.
Substituting this back into the second resolvent identity for $H_0 (h,F,\theta)$ and
$S_h(F, \theta)$, we obtain
\beq\label{eq:resolv2}
\frac{1}{z- S_h(F, \theta)} - \frac{1}{z - H_0(h,F,\theta)} =
 -\frac{1}{z - H_0(h,F,\theta)}\frac{1}{ 1+ K_h (F, \theta, z)} K_h (F, \theta, z).
\eeq
We note a similar identity for $h=0$ comparing $H_V(\theta) = H_0(\theta) + V(\theta)$
with $H_0(\theta) =-(1/2) e^{-2 i \theta} \Delta$. We let $K_0(\theta, z) \equiv
V(\theta)(H_0(\theta) - z)^{-1}$ in analogy with (\ref{eq:kernel-res1}) for $F=0$.
\beq\label{eq:resolv3}
\frac{1}{z- H_V (\theta)} - \frac{1}{z - H_0(\theta)} =
 -\frac{1}{z - H_0(\theta)} \frac{1}{ 1+ K_0 (\theta, z)} K_0 (\theta,z).
\eeq


\noindent
2. We subtract (\ref{eq:resolv2}) from (\ref{eq:resolv3}) and substitute the
difference into the integral in (\ref{eq:contour1}). Since both $(z-H_0(h,F,\theta))^{-1}$
and $(z-H_0(\theta))^{-1}$ are analytic on and inside
$\gamma_N$, their contribution to the contour integral vanishes.
Consequently, the difference of the projections in (\ref{eq:contour1})
is equal to the contour integral
\bea\label{eq:contour2}
\frac{1}{2 \pi i} \int_{\gamma_N} ~dz \left[ \frac{1}{z- H_0(h,F,\theta) }K_h(F, \theta,z)
\frac{1}{1 + K_h(F, \theta,z)}
 \right. \nonumber \\
 \left. - \frac{1}{z - H_0(\theta)} K_0 (\theta,z) \frac{1}{ 1+ K_0 (\theta,z)} \right] .
\eea
From Proposition \ref{prop:stark2}, part (2), we have that $K_h(F,\theta,z)$ and $K_0(\theta,z)$
are compact and that $K_h (F, \theta,z) \rightarrow K_0 (\theta,z)$ in norm.
We also use the fact that $(z-H_0(h,F,\theta))^{-1} \rightarrow (z - H_0(\theta))^{-1}$ strongly
as $h \rightarrow 0$.
We rewrite the integrand in (\ref{eq:contour2})
as a sum of three terms:
\bea 
I &\equiv & \frac{1}{z- H_0(h,F,\theta) }[ K_h(F, \theta,z) - K_0(\theta,z) ] \frac{1}{1 + K_h(F, \theta,z)}
  \label{eq:resterm1} \\
II &\equiv & \left[ \frac{1}{z- H_0(h,F,\theta) } - \frac{1}{z - H_0(\theta) } \right] K_0 (\theta,z)
\frac{1}{1 + K_0(\theta,z)} \label{eq:resterm2} \\
III &\equiv & \frac{1}{z- H_0(h,F,\theta) }K_0(\theta, z) \left[ \frac{1}{1 + K_h(F, \theta,z)} -
\frac{1}{1 + K_0 ( \theta, z)} \right]. \label{eq:resterm3}
\eea

\noindent
3. We need estimates on the operators $K_0 (\theta, z)$ and $K_h (F , \theta,z)$ and their resolvents at $-1$ that
appear in \eqref{eq:resterm1}, \eqref{eq:resterm2}, and \eqref{eq:resterm3}.
We begin with the estimates for $K_0(\theta, z)$. From part (3) of Lemma \ref{lemma:resolventest1}, we have
\beq\label{eq:k0-1}
\| K_0 (\theta, z) \| = \mathcal{O}(N).
\eeq
From the definition of $K_0 (\theta, z)$, we easily find that
\beq\label{eq:K0-2}
(1 + K_0 (\theta, z))^{-1} = 1 - V(\theta) ( H_0 (\theta) + V(\theta) - z)^{-1} .
\eeq
Recall that the Coulomb potential $V$ is relatively $H_0$-bounded with relative bound less than one. So there
exist constants $0 < a < 1$ and $b > 0$, so that for all $u \in H^2 (\R^3)$, we have
\beq\label{eq:relVbd1}
\| V u \| \leq a \| H_0 u \| + b \| u \| .
\eeq
Scaling by $e^{i \theta}$, with $\theta \in \R$, we find
\beq\label{eq:relVbd10}
\| V(\theta) u \| \leq a \| H_0 (\theta) u \| + b \| u \| .
\eeq
Replacing $u$ by $R_V(\theta) w = (H_0 (\theta) + V(\theta) - z)^{-1}w$, for $z \in \gamma_N$
and $w \in L^2 (\R^3)$, we obtain
\beq\label{eq:relVbd2}
\| V(\theta) R_V(\theta) w \| \leq a \| w\| + ( a |z| + b) \| R_V(\theta) w \| + a \| V(\theta) R_V(\theta) w \| .
\eeq
Since $0 < a < 1$ and $|z| = \mathcal{O}(N^{-2})$, we obtain
\beq\label{eq:relVbd3}
\| V(\theta) R_V(\theta) \| \leq C_1 + C_2 \| R_V(\theta)  \| .
\eeq
It follows from \eqref{eq:K0-2}, \eqref{eq:relVbd3}, and the fact that $\| R_V(\theta) \| = \mathcal{O}(N^3)$ that
\beq\label{eq:K0-3}
\| (1 + K_0 (\theta, z))^{-1} \| = \mathcal{O}(N^3).
\eeq

\noindent
4. The second estimate we need concerns $1 + K_h (F, \theta,z)$. We write
\bea\label{eq:KF-1}
1 + K_h (F, \theta, z) &=& [ 1 + K_0 ( \theta, z) ] \nonumber \\
 && \times [  1 +  ( 1 + K_0  ( \theta, z) )^{-1} ( K_h (F, \theta,z) - K_0 (\theta, z) )] ,
 \eea
 from which it follows that
 \beq\label{eq:KF-2}
( 1 + K_h (F, \theta,z))^{-1} = [ 1 + M_h (F, \theta,z) ]^{-1} [ 1 + K_0 ( \theta, z) ]^{-1},
\eeq
where
\beq\label{eq:M-defn1}
M_h(F , \theta, z) = ( 1 + K_0 ( \theta, z) )^{-1} ( K_h (F, \theta, z) - K_0 (\theta, z)).
\eeq
It follows from part (2) of Proposition \ref{prop:stark2} and \eqref{eq:K0-3} that
\beq\label{eq:M-bound1}
\| M_h (F, \theta,z) \| = \mathcal{O}(N^{3-K-\delta}),
\eeq
so if $K \geq 3$, this term is less than $1/2$ for all $N$ large. It follows from
\eqref{eq:KF-2} that
\beq\label{eq:KF-3}
\| ( 1 + K_h (F, \theta,z))^{-1}  \| = \mathcal{O}(N^3).
\eeq

\noindent 5. We estimate each term I, I, and III in
\eqref{eq:resterm1}, \eqref{eq:resterm2}, and \eqref{eq:resterm3},
uniformly in $z \in \gamma_N$, using the estimate of Proposition \ref{prop:stark2}.
For the first term I in (\ref{eq:resterm1}), we use part (1) of Lemma \ref{lemma:resolventest1},
part (2) of Proposition \ref{prop:stark2}, and \eqref{eq:KF-3}, to obtain
for $K \geq 3$:
\beq\label{eq:estI-1}
\| I \| = \mathcal{O}( N^2 \cdot N^{-K - \delta} \cdot N^3) = \mathcal{O}(N^{5-K-\delta}).
\eeq
As for II in (\ref{eq:resterm2}), we use the quadratic estimate in \cite[Proposition II.4]{herbst1}
(presented in appendix \ref{appendix:QE})
in order to prove the bound
\beq\label{eq:QE1}
\| x_1 ( H_0(h,F, \theta) - z)^{-1} \| = \mathcal{O}(1), ~~ z \in \gamma_N.
\eeq
Recalling that $H_0 (h, F, \theta) - H_0 (\theta) = W_h (F, \theta) = h^4 \epsilon (h ) e^{i \theta} F x_1$,
we find that
\beq\label{eq:estII-1}
\| II \| = \mathcal{O}(  N^{-4 -K - \delta}  \cdot N^2 \cdot N \cdot N^3) = \mathcal{O}(N^{2-K-\delta}).
\eeq
Recalling that $K \geq 3$, we see
that the term II vanishes uniformly on $\gamma_N$ as $N \rightarrow \infty$.

Finally, for the last term III in (\ref{eq:resterm3}),
part (1) of Lemma \ref{lemma:resolventest1}, estimate \eqref{eq:k0-1}, together with
\eqref{eq:KF-3}, \eqref{eq:K0-3} and part (2) of Proposition \ref{prop:stark2}, yield
\bea\label{eq:estIII-1}
\| III \| &= & \mathcal{O}( N^2 \cdot N \cdot N^3 \cdot N^{-K - \delta} \cdot N^3) \nonumber \\
 &=& \mathcal{O}(N^{9-K-\delta}).
\eea


\noindent
6. The difference of the projectors on the left side of \eqref{eq:proj-close1}
may be estimated from \eqref{eq:contour2} and the above estimates, recalling that $|\gamma_N| = 2 \pi (1/ (8N^3))$:
\beq\label{eq:proj-close2}
\| \Pi_N^0 ( \theta) - P_N (\theta) \| \leq | \gamma_N | ( \| I \| + \| II \| + \| III \| )
  =  \mathcal{O}(N^{6-K-\delta}), ~~ K \geq 3.
\eeq
So for $K \geq 6$, we obtain the vanishing of the difference as $N \rightarrow \infty$.
Parts (1) and (2) of the lemma follow from (\ref{eq:proj-close1}) simply by writing
\beq\label{eq:perp-p1}
(\Pi_N^0 (\theta))^\perp P_N(\theta) =  (\Pi_N^0 (\theta))^\perp ( P_N (\theta) - \Pi_N^0 (\theta) ) ,
\eeq
and similarly for part (2).
\end{proof}

\subsection{Step 3. Estimates on dilated coherent states.}\label{subsec:dlated-cohst1}

In order to control the perturbation $W_h (F, \theta)$, we need estimates on the
following operators: $\Pi_N^0 (\theta) W_h(F,\theta) \Pi_N^0 (\theta)$,
$\Pi_N^0 (\theta) W_h(F,\theta)$, and the operators $P_N(\theta) (S_h(F, \theta) - \tilde{E}_N) P_N(\theta)$
and $P_N(\theta) (S_h(F, \theta) - \tilde{E}_N) \Pi_N^0 (\theta)$. Estimates on the first two
operators are given in Lemma \ref{lemma:moments1}, and on the second two operators
are given in Lemma \ref{lemma:pretrace1}.

Heuristically, we are able to control the first two operators due to the fact that the matrix elements of moments of the position
operator $\|x\|$ in the eigenstates $\psi_N(\theta)$ of $H_V(\theta)$ satisfy
\beq\label{eq:matrixele0}
\langle \psi_N (\theta) , \|x\|^m \psi_N (\theta) \rangle \sim N^{2m} .
\eeq
This decay is due to the fact that the eigenstate is well localized about the Bohr radius and the Bohr
radius scales like $N^2$.
A proof of this localization property of the eigenfunctions is given in the Appendix 2
in section \ref{sec:app2-ef}.
This localization, however, is too weak to control the operator norm of the operators $\Pi_N^0 (\theta) W_h(F,\theta) \Pi_N^0 (\theta)$ and
$\Pi_N^0 (\theta) W_h(F,\theta)$. Using \eqref{eq:matrixele0}, we easily arrive at estimates of the type
\beq\label{eq:efest1}
\| \Pi_N^0 (\theta) W_h(F, \theta) \Pi_N^0 (\theta) \| = \mathcal{O}( N^2 \epsilon (h) h^2 ),
\eeq
and the $N^2$ growth will not allow control of the trace in (\ref{eq:trace2}) since this is divided by
$h^2 \epsilon (h)$. Instead of using \eqref{eq:matrixele0},
we use coherent states $\Psi_{\alpha, N}$ that
form an overdetermined basis set of functions for the eigenspace of the hydrogen atom Hamiltonian
corresponding to the eigenvalue $\tilde{E}_N$. These coherent states were described in detail
in the papers of Bander and Itzykson \cite{BI:1966} and more recently in \cite{thomas-villegas1,uribe-villegas1}.
We recall the main points that we need here in Appendix 1, section \ref{sec:app1-coherentstates1}.
Recall that the dimension of the range of the projector $\Pi_N^0 (\theta)$ is $N^2$ and that $\epsilon (h)
= h^{K + \delta}$, for $K \geq  6$.

We use the following notation for operators that occur as remainder terms
and that have bounds depending on $h$ but uniform
with respect to any other parameters.
Let $K ( g(h)) $, for a function $g(h)$,
denote a bounded operator
with
\beq\label{eq:op-bound1}
\| K ( g(h)) \| = \mathcal{O}(g(h)).
\eeq
an example of a function $g(h)$ is $(\epsilon (h) h^2)^m$.
We will also write $K( \mathcal{O}(h^{-\ell}))$ to mean a bounded linear operator
with $\|K( \mathcal{O}(h^{-\ell})) \| = \mathcal{O}(h^{-\ell})$.
The actual form of $K$ is unimportant and may vary from line
to line but a bound of the type \eqref{eq:op-bound1} or of the type $\mathcal{O}(h^{-\ell})$ will always hold.

\begin{lemma}\label{lemma:moments1}
There exists a constant $r_0 > 1$, independent of $\theta$ with $| \theta| < \pi / 4$, so that for any
$n \in \N$, we have
\bea\label{eq:cost-proj01}
\Pi_N^0 (\theta) D_{N^2}^{-1} \|x\|^n D_{N^2} \Pi_N^0 (\theta) & =& \Pi_N^0 (\theta) D_{N^2}^{-1} \|x\|^n
\chi_{\|x\| \leq r_0} D_{N^2} \Pi_N^0 (\theta) \nonumber \\
  & & + \Pi_N^0 K( \mathcal{O}(N^{-\infty})),
\eea
where $\chi_{\|x\| \leq r_0}$ is the characteristic function on the set $\{ x \in \R^3 ~|~ \| x \| \leq r_0 \}$.
As a consequence, we have the following estimates on the perturbation restricted to the
eigenspace of $H_V(\theta)$:
\begin{enumerate}
\item $ \| \Pi_N^0 (\theta) W_h(F, \theta) \Pi_N^0 (\theta) \| = \mathcal{O}(N^{-K-2-\delta})
= \mathcal{O}(\epsilon (h) h^2 )$ ,
\item  $ \| \Pi_N^0 (\theta) W_h(F, \theta) \| = \mathcal{O}(N^{-K-2-\delta})
= \mathcal{O}( \epsilon (h) h^2 )$.
\end{enumerate}
\end{lemma}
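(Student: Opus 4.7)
The plan is to prove the main identity via the coherent-state resolution of $\Pi_N^0(\theta)$ from Appendix~1, combined with the analytically continued Agmon-type localization estimates for the rescaled eigenfunctions supplied by Appendix~2. The guiding picture is that eigenfunctions of $H_V(\theta)$ at eigenvalue $\tilde E_N = -1/(2N^2)$ live essentially in the classically allowed region $\{\|x\| \leq CN^2\}$ with exponential decay outside at rate $\mathcal{O}(N)$; after dilation by $N^2$ they concentrate on a region of size $\mathcal{O}(1)$, and the $L^2$-mass of the rescaled coherent states $D_{N^2}\Psi_{\alpha,N}(\theta)$ outside a fixed ball of radius $r_0$ is $\mathcal{O}(N^{-\infty})$, uniformly in the angular label $\alpha \in \mathcal{A}$ and in $\theta$ with $|\theta|<\pi/4$.

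First I would reduce \eqref{eq:cost-proj01} using the scaling identities $D_{N^2}^{-1}\|x\|^n D_{N^2} = N^{-2n}\|x\|^n$ and $D_{N^2}^{-1}\chi_{\|x\|\leq r_0}D_{N^2} = \chi_{\|x\|\leq r_0 N^2}$, so the claim becomes the operator-norm bound
\beq
\bigl\| \Pi_N^0(\theta)\,\|x\|^n\,\chi_{\|x\|>r_0 N^2}\,\Pi_N^0(\theta)\bigr\| = \mathcal{O}(N^{-\infty}).
\eeq
Expanding $\Pi_N^0(\theta) = \int_\mathcal{A} |\Psi_{\alpha,N}(\theta)\rangle\langle \Psi_{\alpha,N}(\bar\theta)|\,d\mu(\alpha)$, this in turn reduces to the uniform matrix-element estimate
\beq
\bigl|\langle \Psi_{\alpha,N}(\bar\theta),\,\|x\|^n\chi_{\|x\|>r_0 N^2}\Psi_{\beta,N}(\theta)\rangle\bigr| = \mathcal{O}(N^{-\infty})
\eeq
over $\alpha,\beta \in \mathcal{A}$ and $|\theta|<\pi/4$. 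Changing variables $x\mapsto N^2 x$ turns the left-hand side into $N^{2n}$ times an integral of $\|x\|^n$ on $\|x\|>r_0$ against $\overline{D_{N^2}\Psi_{\alpha,N}(\bar\theta)}\cdot D_{N^2}\Psi_{\beta,N}(\theta)$; Cauchy--Schwarz combined with the exponential tail $e^{-cN\|x\|}$ from Appendix~2 defeats the polynomial factors $N^{2n}\|x\|^n$ and produces the required bound.

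For the consequences, I would use $W_h(F,\theta) = h^4\epsilon(h)e^{i\theta}Fx_1$ together with $x_1 = N^2 D_{N^2}^{-1}x_1 D_{N^2}$ to rewrite
\beq
\Pi_N^0(\theta)W_h(F,\theta)\Pi_N^0(\theta) = h^4\epsilon(h)e^{i\theta}F\,N^2\,\Pi_N^0(\theta)D_{N^2}^{-1}x_1 D_{N^2}\Pi_N^0(\theta).
\eeq
Applying \eqref{eq:cost-proj01} with $n=1$ (noting $|x_1|\leq \|x\|$) and observing that $D_{N^2}^{-1}x_1\chi_{\|x\|\leq r_0}D_{N^2} = N^{-2}x_1\chi_{\|x\|\leq r_0 N^2}$ has operator norm at most $r_0$, the full expression is of order $h^4\epsilon(h)N^2 = \epsilon(h)h^2$, giving (1). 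For (2), I would use the $C^*$-identity $\|\Pi_N^0(\theta)W_h(F,\theta)\|^2 = \|\Pi_N^0(\theta)W_h(F,\theta)W_h(F,\theta)^*\Pi_N^0(\theta)^*\| = (h^4\epsilon(h)F)^2\|\Pi_N^0(\theta)x_1^2\Pi_N^0(\theta)^*\|$ and apply the same localization argument with $n=2$ and with $\Pi_N^0(\theta)^*$ (corresponding to the conjugate angle $-\bar\theta$, still inside the uniform strip) in place of the right projector; this yields $\|\Pi_N^0(\theta)x_1^2\Pi_N^0(\theta)^*\| = \mathcal{O}(N^4)$ and hence the stated bound.

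The main obstacle is precisely the uniform super-polynomial decay invoked above: the Appendix~2 localization estimate for $D_{N^2}\Psi_{\alpha,N}(\theta)$ must hold with constants independent of $\alpha\in \mathcal{A}$ and of $\theta$ throughout the open strip $|\theta|<\pi/4$. The $\theta$-uniformity is the more delicate of the two, since the analytically continued coherent states carry an exponential factor whose effective decay rate is proportional to $\cos(\mathrm{Im}\,\theta)\cdot N$, and one must verify that this stays bounded below by $cN$ uniformly on closed subsets approaching the boundary of the strip; the $\alpha$-uniformity rests on the homogeneity and compactness of $\mathcal{A}$ together with the explicit Bander--Itzykson form of the coherent states. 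Without a single $r_0$ valid for all $(\alpha,\theta)$ in the required ranges, the polynomial weight $N^{2n}$ cannot be absorbed and the whole scheme collapses.
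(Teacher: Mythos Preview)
Your reduction to a uniform matrix-element bound via the coherent-state resolution, and your derivation of consequences (1) and (2) from the main identity, are correct and parallel the paper. The gap is in the core localization step. You invoke ``Agmon-type localization estimates'' with an exponential tail $e^{-cN\|x\|}$ ``supplied by Appendix~2,'' but Appendix~2 contains nothing of the sort: it gives only the moment bounds $\langle\psi_{n\ell m},\|x\|^k\psi_{n\ell m}\rangle = \mathcal{O}(n^{2k})$ of \eqref{eq:matrixele222}, which the discussion around \eqref{eq:efest1} explicitly declares too weak for this very purpose. Your closing paragraph correctly isolates the uniform super-polynomial decay as the crux, but offers only the heuristic that the effective exponential rate is $\cos\theta\cdot N$; this ignores the polynomial prefactors of degree $N-1$ carried by the hydrogen eigenfunctions (and hence by the coherent states), which after rescaling contribute factors of order $N^{N}$ that must be shown to lose against the exponential uniformly in $(\alpha,\theta)$. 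An Agmon-type argument along your lines might be made to work, but it is not in the paper and you have not supplied it.

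The paper's proof takes a genuinely different route. It works in the momentum representation, where by \eqref{eq:cohst-p1} the rescaled dilated coherent state has the explicit form $\bigl(2/(e^{-2i\theta}\|p\|^2+1)\bigr)^2 a(N-1)\,(\alpha\cdot\omega(e^{-i\theta}p))^{N-1}$. Multiplication by $(x\cdot b)^q$ in position space becomes $((1/N)\,b\cdot\nabla_p)^q$ on the Fourier side; since $b\cdot\nabla_p$ generates translations, this is controlled by analytically continuing in the complex translation parameter $z$, bounding the translated function for $|z|/N<1/120$ via elementary pointwise estimates on $|\alpha\cdot\omega|$ and the rational prefactor (these are where the uniformity in $\alpha$ and in $|\theta|<\pi/4$ is actually established), and then applying Cauchy's integral formula on $|z|=N/240$ to obtain $\|(x\cdot b)^q D_{N^2}\Psi_{\alpha,N}(e^{i\theta}\cdot)\| \leq c_3\,q!\,N^{1/2}e^{c_2 N}(N/240)^{-q}$. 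Taking $q\sim N/2$, Chebyshev's inequality plus Stirling's formula yield the $\mathcal{O}(N^{-\infty})$ tail on $\{|x\cdot b|>r_0/2\}$ once $r_0$ is chosen large enough that $r_0/2>e^{2c_4}$. Thus the required uniformity comes from explicit momentum-space bounds on the Bander--Itzykson coherent states and a Paley--Wiener/Cauchy-estimate mechanism, not from position-space Agmon decay or compactness of $\mathcal{A}$.
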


\begin{proof}
As mentioned above, the key to controlling the perturbation is the strong localization
property of the coherent states. Coherent states for the hydrogen atom are reviewed in section \ref{sec:app1-coherentstates1}.
We prove below that the dilated coherent states $\Psi_{\alpha, N} (e^{i \theta} x)$
are $L^2$-valued analytic functions of $\theta$ provided $| \Re \theta| < \pi / 4$
and provide uniform bounds.
Since, as above, $\Im \theta$ pays no role in the calculations, we set $\Im \theta = 0$.

\noindent
1. We first prove a decay estimate for the dilated coherent states that is the
analog of \cite[Lemma 4.1]{thomas-villegas1}.
We prove that there exists a constant $r_0 > 1$, independent of $\alpha \in \mathcal{A}$, $N \in \N$,
and $| \theta | < \pi / 4$, so that for all $n,s \in \N$,
\beq\label{eq:cost-limit1}
\lim_{N \rightarrow \infty} N^s \int_{\| x \| > r_0 } ~\|x  \|^n | D_{N^2} \Psi_{\alpha, N} (e^{i \theta} x) |^2 ~d^3 x = 0.
\eeq
This estimate implies that
\beq\label{eq:cost-limit11}
\left| \int_{\R^3}  \Psi_{\beta, N}^* (e^{i \theta} x) D_{N^2}^{-1} \| x \|^n \chi_{\|x\| >r_0} D_{N^2}
\Psi_{\alpha, N} (e^{i \theta} x) ~d^3 x \right| = \mathcal{O}(N^{- \infty} ).
\eeq
This estimate is uniform with respect to $\alpha, \beta \in \mathcal{A}$.
We turn to the proof of \eqref{eq:cost-limit1}.
For any $b \in \Sp^2$, we will prove
\beq\label{eq:cost-est1}
\lim_{N \rightarrow \infty} N^s \int_{| x \cdot b| \geq r_0 / 2} ~ |x \cdot b|^n
| D_{N^2} \Psi_{\alpha, N} (e^{i \theta} x)|^2 ~d^3 x = 0.
\eeq
By making appropriate choices of $b$ we recover \eqref{eq:cost-limit1}
by a finite sum.
Following the proof of \cite[Lemma 4.1]{thomas-villegas1},
we first prove that for any $q \geq 0$, there exist constants $c_2 , c_3 > 0$,
independent of $\theta$, $\alpha$ and $N$,
so that
\beq\label{eq:cost-est1a}
\left( \int_{\R^3}  |x \cdot b|^{2q}
| D_{N^2} \Psi_{\alpha, N} (e^{i \theta} x)|^2 ~d^3 x \right)^{1/2 }   \leq
\frac{c_3 q! N^{1/2} e^{c_2 N }}{ (N/ 240)^q} ,
\eeq
for any $b \in \Sp^2$. We repeat the argument from \cite[Lemma 4.1]{thomas-villegas1} showing how
the estimate \eqref{eq:cost-est1a} implies \eqref{eq:cost-est1}.
Given $n$ and $s$ from \eqref{eq:cost-est1}, we take $m \in \N$ so that $N-1 < m+n <N$ and set $q = (n+m) / 2$.
Since $|x \cdot b| \geq r_0 / 2$,
we use Chebyshev's inequality and Stirling's formula for $q!$
to estimate \eqref{eq:cost-est1} from \eqref{eq:cost-est1a},
\bea\label{eq:cost-est1b}
\lefteqn{N^s \int_{| x \cdot b| \geq r_0 / 2} ~ |x \cdot b|^n
| D_{N^2} \Psi_{\alpha, N} (e^{i \theta} x)|^2 ~d^3x } \nonumber \\
 &\leq& \frac{N^s}{ (r_0 /2)^m} \int_{\R^3 } ~ |x \cdot b|^{m+n}
| D_{N^2} \Psi_{\alpha, N} (e^{i \theta} x)|^2 ~d^3x  \nonumber \\
&\leq& \frac{N^s}{ (r_0 /2)^m} \left[ c_3 q! N^{1/2} (N/ 240)^{-q} e^{N c_2 } \right]^2 \nonumber \\
 & \leq & \frac{c_5 N^{s + 3}}{ (r_0 / 2)^{N - 1 - n}} e^{ N  c_4 } ,
\eea
where $c_4 \equiv 2  c_2 + \log 120 -1 > 0$.
We choose $r_0 > 2$ so that $r_0 / 2 > e^{2 c_4}$ so that the
right side of \eqref{eq:cost-est1b} is bounded by
\beq\label{eq:cost-est1c} c_6 N^{s + 3} e^{- Nc_4 } , \eeq and
this vanishes as $N \rightarrow \infty$. This proves
\eqref{eq:cost-est1}.

\noindent
2. To prove (\ref{eq:cost-est1a}), we change to the momentum variable (see \eqref{eq:cohst-p1})
so that
\bea\label{eq:cost-est2}
\lefteqn{ \| (x \cdot b)^q D_{N^2} \Psi_{\alpha, N} (e^{i \theta} \cdot) \|^2 } \nonumber \\
& = & \int_{\R^3} \left| \left(
\frac{1}{N} b \cdot \nabla_p \right)^q \left( \frac{2}{ e^{-2 i \theta} \|p \|^2 +1 } \right)^2 a(N-1) (\alpha \cdot \omega ( e^{-i \theta} p))^{N-1}
\right|^2 ~d^3 p . \nonumber \\
 & &
\eea
We next use the fact that $b \cdot \nabla_p$ generates translations in $p$ so that for $z \in \C$,
\bea\label{eq:cost-est3}
\lefteqn{ e^{ (z/N) b \cdot \nabla_p} \left( \frac{2}{ e^{-2 i \theta} \|p \|^2 +1 } \right)^2 a(N-1) (\alpha \cdot \omega ( e^{-i \theta} p))^{N-1}
 } \nonumber \\
 &=& \left( \frac{2}{ e^{-2 i \theta} \left(p + (z/N) b \right)^2 +1 } \right)^2 a(N-1) (\alpha \cdot \omega ( e^{-i \theta}( p + (z/N)b ))^{N-1}
\eea
We need some estimates. First, in order to guarantee that the
function in \eqref{eq:cost-est3} remains in $L^2 (\R^3)$, we observe
that if $|z| / N < 1/120$ and $|\theta| < \pi / 4$, there are finite constants
$0 < C_1, C_2$ so that
\beq\label{eq:cost-est5a}
\left| \frac{2 e^{2 i \theta}}{  (p + (z/N) b)^2   + e^{2 i \theta}} \right|^2 \leq
\left\{ \begin{array}{cc}
 \left( \frac{C_1}{\|p \|^2 + 1 } \right)^2 & \| p \| > 2 \\
  C_2 & \| p \| \leq 2 .
\end{array}
\right.
\eeq
This estimate is proved by estimating the absolute values of the real and imaginary parts of
$( p + (z/N) b )^2 + e^{2 i \theta}$ from below. So provided all other factors are
uniformly bounded in $p$, the function in \eqref{eq:cost-est3} is
square integrable. Next, we prove the uniform bounds on the other
factors for $|z| / N < 1/2$ and $|\theta| < \pi / 3$. These conditions are less restrictive than needed for \eqref{eq:cost-est5a}.
 We note that
\beq\label{eq:cost-est4}
 | \alpha \cdot \omega ( e^{-i \theta}p) |
\leq  \sqrt{10} ,
\eeq
for $|\theta| < \pi / 3$ since $| \alpha| =
\sqrt{2}$.
In order to estimate $\omega$, we expand about $p$ and write
\beq\label{eq:cost-est5} \omega ( e^{-i \theta}(p + (z/N) b)) =
\omega ( e^{-i \theta}p) + \nabla_p \omega ( e^{-i
\theta}\tilde{p}) \cdot (z/N) e^{-i \theta} b, \eeq for some
$\tilde{p}$. It is easy to check that for $|z|/ N < 1/2$, the
gradient term satisfies \beq\label{eq:cost-est6} |\nabla_p \alpha
\cdot \omega ( e^{-i \theta}(p+ (z/N)b )) | \leq c_1, \eeq so that
\beq\label{eq:cost-est7} | \alpha \cdot \omega ( e^{-i \theta}(p +
(z/N) b)) | \leq  \sqrt{10}( 1 + c_2 |z| / N) . \eeq Consequently,
for any $N$ and $z \in \C$ so that $|z| / N < 1/2$, we have
\beq\label{eq:cost-est8} | \alpha \cdot \omega ( e^{-i \theta}(p +
(z/N) b)) |^{N-1} \leq  c_0^N e^{c_2 |z|} , \eeq for absolute
constants $c_0, c_1 > 0$. Combining these, we obtain
\beq\label{eq:cost-est9} \left\| e^{ (z/N) b \cdot \nabla_p}
\left( \frac{2}{ e^{-2 i \theta} \|p \|^2 +1 } \right)^2 a(N-1)
(\alpha \cdot \omega ( e^{-i \theta} p))^{N-1} \right\| \leq c_3
N^{1/2} c_0^N e^{c_2 | z|} , \eeq since $a(N-1) \sim \sqrt{N-1}$.

\noindent
3. We now use Cauchy's theorem, with estimate (\ref{eq:cost-est9}), in order to estimate (\ref{eq:cost-est2}), by integrating
over a path in the $z$-plane of radius $N / 240 < N / 120$ about the origin so that estimate \eqref{eq:cost-est5a} is valid.
This gives
\bea\label{eq:cost-est10}
\lefteqn{ \| (x \cdot b)^q D_{N^2} \Psi_{\alpha, N} (e^{i \theta} \cdot ) \| } \nonumber \\
 & \leq & \frac{q!}{2 \pi} \int_{|z| = N/240} \frac{|dz|}{|z|^{q+1}}
\left\| e^{ (z/N) b \cdot \nabla_p} \left( \frac{2}{ e^{-2 i \theta} \|p \|^2 +1 } \right)^2 a(N-1) (\alpha \cdot \omega ( e^{-i \theta} p))^{N-1}
\right\|  \nonumber \\
 & \leq &  \frac{c_3 q! N^{1/2} e^{c_2 N }}{ (N/ 240)^q} ,
\eea
where the finite constant $c_2 >0$  is a function of $c_0$ and $c_1$.
This establishes \eqref{eq:cost-est1a}.

\noindent
4. It follows from estimate (\ref{eq:cost-limit1}) that for $\alpha, \beta \in \mathcal{A}$, we have
\beq\label{eq:cost-me1}
\langle \Psi_{\alpha, N}, D_{N^2}^{-1} \|x\|^n \chi_{\|x\| > r_0} D_{N^2} \Psi_{\beta, N} \rangle = \mathcal{O}(N^{-\infty}),
\eeq
where the error is uniform over $\mathcal{A} \times \mathcal{A}$.
Of importance for us is that this estimate \eqref{eq:cost-me1} implies that the moments of the position operator in coherent states
satisfy
\bea\label{eq:cost-me2}
\lefteqn{ \langle \Psi_{\alpha, N}( \cdot; \theta), D_{N^2}^{-1} \|x\|^n  D_{N^2} \Psi_{\beta, N}(\cdot; \theta) \rangle } \nonumber \\
& = &
\langle \Psi_{\alpha, N} (\cdot; \theta), D_{N^2}^{-1} \|x\|^n \chi_{\|x\| \leq r_0} D_{N^2} \Psi_{\beta, N} (\cdot; \theta) \rangle
+  \mathcal{O}(N^{-\infty}).
\eea
It now follows from (\ref{eq:cost-me2}) and the representation \eqref{eq:proj-coherentstates1} of the projector, suitably dilated, that
we have the operator estimate
\beq\label{eq:cost-proj1}
\Pi_N^0 (\theta) D_{N^2}^{-1} \|x\|^n D_{N^2} \Pi_N^0 (\theta)  =
 \Pi_N^0 (\theta) D_{N^2}^{-1} \|x\|^n
\chi_{\|x\| \leq r_0} D_{N^2} \Pi_N^0 (\theta) + \Pi_N^0 ( \theta ) R_N \Pi_N^0 (\theta),
\eeq
%
%
where the remainder $R_N$ is given by
\beq\label{eq:cost-remainder1}
R_N \equiv \int_\mathcal{A} \int_\mathcal{A}
\langle \Psi_{\alpha, N}( \cdot; \theta), D_{N^2}^{-1} \|x\|^n \chi_{ \| x\| > r_0}
D_{N^2} \Psi_{\beta, N}(\cdot; \theta) \rangle ~P_{\alpha, \beta}~d \mu (\alpha) d \mu (\beta) ,
\eeq
where $P_{\alpha, \beta}$ is the dyadic operator
\beq\label{eq:dyadic1}
P_{\alpha, \beta} \equiv ~ | \Psi_{\alpha, N} (\theta) \rangle
\langle \Psi_{\beta, N} (\theta) |  .
\eeq
To estimate $\| R_N \|$, we use estimate \eqref{eq:cost-me1} and the
fact that the measure $\mu$ on $\mathcal{A}$ is a probability measure.
We obtain
\beq\label{eq:cost-remainder2}
\| R_N \| \leq C \sup_{\alpha, \beta \in \mathcal{A}} ( \| \Psi_{\alpha, N}(\theta) \| ~  \|
\Psi_{\beta, N}(\theta) \| ) e^{- N c_4 } ,
\eeq
where $c_4 > \log 240$ as in \eqref{eq:cost-est1c}.
The $L^2$-norms of the dilated coherent states can be estimates using \eqref{eq:cost-est5a} and
\eqref{eq:cost-est4}. They satisfy the bound
\beq\label{eq:cost-dilate-bd1}
\| \Psi_{\alpha, N}(\theta) \| \leq C N^2 e^{N (\log 10 ) / 2} , ~~| \Im \theta| < \pi / 2.
\eeq
Consequently, it follows from \eqref{eq:cost-remainder2} and \eqref{eq:cost-dilate-bd1}
that $\| R_N \| \leq C^{- N c_5}$, for some $c_5 > 0$.
Equation \eqref{eq:cost-proj01} then follows from \eqref{eq:cost-proj1}, \eqref{eq:cost-remainder2}, and the fact
that the measure $\mu$ on $\mathcal{A}$ is a probability measure

\noindent
5. We can now prove the lemma. Recall from (\ref{eq:pert1}) that
\beq\label{eq:pert11}
W_h (F, \theta) = h^4 \epsilon (h) e^\theta F x_1 = h^2 \epsilon (h) D_{N^2}^{-1} (Fx_1) D_{N^2},
\eeq
for $N = 1/h$. For part (1), we have
\beq\label{eq:moment2}
\| \Pi_N^0(\theta) W_h(F,\theta) \Pi_N^0 (\theta) \| \leq c_0 (h^2 \epsilon (h))F r_0 + \mathcal{O}( N^{-\infty}),
\eeq
for a constant $r_0> 1$.
For part (2), we use $\| ( \Pi_N^0(\theta) W_h(F,\theta) )^* ( W_h(F,\theta) \Pi_N^0 (\theta) ) \|
= \| W_h(F,\theta) \Pi_N^0 (\theta) \|^2 =
 \| \Pi_N^0(\theta) W_h(F,\theta) \|^2$, so that estimate (\ref{eq:cost-proj1}) with $n=2$ provides the estimate.
\end{proof}


\subsection{Step 4. Reduction of the perturbation.}\label{subsec:perturbation1}

We now turn to controlling the operators $P_N(\theta) (S_h(F, \theta) - \tilde{E}_N) P_N(\theta)$
and $P_N(\theta) (S_h(F, \theta) - \tilde{E}_N) \Pi_N^0 (\theta)$.
Using Lemmas \ref{lemma:proj1} and \ref{lemma:moments1}, we can prove the analog of
\cite[Lemma 5]{uribe-villegas1}.

\begin{lemma}\label{lemma:pretrace1}
For any positive integer $m$, we have
\beq\label{eq:offdiag1}
\Pi_N^0 (\theta) \left( \frac{S_h(F, \theta) - \tilde{E}_N }{ \epsilon (h) h^2} \right)^m P_N (\theta)
 = ( \Pi_N^0 (\theta) \tilde{W}_h (F, \theta) \Pi_N^0 (\theta) )^m P_N(\theta) + \Pi_N^0 (\theta) R_{m,N},
 \eeq
where $\| R_{m,N} \| = \mathcal{O}(N^{- \beta}) $, for some $\beta > 0$,
independent of $m$, and $\tilde{W}_h (F, \theta)
= e^{i\theta} h^2 F x_1$, with $\theta \in \R$ and $0 < |\theta| < \pi / 4$.
\end{lemma}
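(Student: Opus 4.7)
The key observation is that $\Pi_N^0(\theta)$ is the Riesz projector of $H_V(\theta)$ at the eigenvalue $\tilde{E}_N$, so it commutes with $H_V(\theta)$ and annihilates $H_V(\theta)-\tilde{E}_N$ from the left. Since $S_h(F,\theta)=H_V(\theta)+\epsilon(h)h^2\tilde{W}_h(F,\theta)$, this yields the master identity
\[
\Pi_N^0(\theta)\,A=\Pi_N^0(\theta)\tilde{W}_h(F,\theta),\qquad A\equiv\frac{S_h(F,\theta)-\tilde{E}_N}{\epsilon(h)h^2},
\]
which is the engine that drives the whole proof. The plan is to proceed by induction on $m$, inserting $I=\Pi_N^0(\theta)+(\Pi_N^0(\theta))^\perp$ at each step to split off the desired main term and absorbing the off-diagonal remainders using the smallness estimates already at our disposal.

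For the base case $m=1$, multiplying the master identity on the right by $P_N(\theta)$ and inserting $I=\Pi_N^0(\theta)+(\Pi_N^0(\theta))^\perp$ after $\tilde{W}_h(F,\theta)$ gives
\[
\Pi_N^0 A\,P_N=\Pi_N^0\tilde{W}_h\Pi_N^0\,P_N+\Pi_N^0\tilde{W}_h(\Pi_N^0)^\perp P_N,
\]
with the second summand bounded by $\|\Pi_N^0\tilde{W}_h\|\cdot\|(\Pi_N^0)^\perp P_N\|=\mathcal{O}(1)\cdot\mathcal{O}(N^{6-K-\delta})$ via Lemma~\ref{lemma:moments1}(2) and Lemma~\ref{lemma:proj1}, i.e.\ $\mathcal{O}(N^{-\delta})$ for $K\geq 6$. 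For the inductive step I would write
\[
\Pi_N^0 A^m P_N=\Pi_N^0\tilde{W}_h\,A^{m-1}P_N=\Pi_N^0\tilde{W}_h\Pi_N^0\,A^{m-1}P_N+\Pi_N^0\tilde{W}_h(\Pi_N^0)^\perp A^{m-1}P_N,
\]
apply the inductive hypothesis to the nested factor $\Pi_N^0 A^{m-1}P_N$ inside the first summand, and use $\Pi_N^0\tilde{W}_h(\Pi_N^0\tilde{W}_h\Pi_N^0)^{m-1}=(\Pi_N^0\tilde{W}_h\Pi_N^0)^m$ to recover the desired main term; the auxiliary piece $(\Pi_N^0\tilde{W}_h\Pi_N^0)R_{m-1,N}$ is controlled by Lemma~\ref{lemma:moments1}(1).

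The crux, and the main obstacle, is bounding the off-diagonal remainder $\Pi_N^0\tilde{W}_h(\Pi_N^0)^\perp A^{m-1}P_N$ with an $N^{-\beta}$ bound whose exponent is uniform in $m$. Since $P_N(\theta)$ is the Riesz projector of $S_h(F,\theta)$ it commutes with $A$, hence $A^{m-1}P_N=P_N A^{m-1}P_N$ and therefore
\[
(\Pi_N^0)^\perp A^{m-1}P_N=(\Pi_N^0)^\perp P_N\cdot A^{m-1}P_N,
\]
with the first factor of order $\mathcal{O}(N^{6-K-\delta})$ by Lemma~\ref{lemma:proj1}. To control $\|A^{m-1}P_N\|$ I would use the contour representation
\[
(S_h(F,\theta)-\tilde{E}_N)^{m-1}P_N(\theta)=\frac{1}{2\pi i}\oint_{\gamma_N}(z-\tilde{E}_N)^{m-1}(z-S_h(F,\theta))^{-1}\,dz,
\]
the resolvent factorization \eqref{eq:resolv1}, and the bounds of Lemma~\ref{lemma:resolventest1} together with \eqref{eq:KF-3}, which give $\|(z-S_h(F,\theta))^{-1}\|=\mathcal{O}(N^5)$ on $\gamma_N$. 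Since $|\gamma_N|=\mathcal{O}(N^{-3})$ and $|z-\tilde{E}_N|=\mathcal{O}(N^{-3})$ on $\gamma_N$, the integrand contributes $\mathcal{O}(N^{-3(m-1)+2})$, which partially cancels the $(\epsilon(h)h^2)^{-(m-1)}$ normalization. Absorbing the surplus polynomial growth into the $N^{6-K-\delta}$ reserve from $(\Pi_N^0)^\perp P_N$ is where the hypothesis $K\geq 6$ is used decisively. Making $\beta$ literally independent of $m$, rather than $m$-dependent, is the sharpest point of the estimate; it may require iterating the $\Pi_N^0/(\Pi_N^0)^\perp$ splitting several times and cascading the small factors $N^{6-K-\delta}$ down the resulting chain until the net exponent becomes $m$-independent, much as in the proof of \cite[Lemma~5]{uribe-villegas1} adapted to the present non-self-adjoint setting.
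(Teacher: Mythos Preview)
Your overall architecture---the master identity $\Pi_N^0 A=\Pi_N^0\tilde W_h$, induction on $m$, and the $\Pi_N^0/(\Pi_N^0)^\perp$ splitting---matches the paper's approach. The genuine gap is precisely where you flag it: the bound on $\|A^{m-1}P_N\|$. Your contour estimate gives
\[
\|(S_h-\tilde E_N)^{m-1}P_N\|\le |\gamma_N|\sup_{\gamma_N}|z-\tilde E_N|^{m-1}\|(z-S_h)^{-1}\|
=\mathcal{O}(N^{-3})\,\mathcal{O}(N^{-3(m-1)})\,\mathcal{O}(N^{5}),
\]
so after dividing by $(\epsilon(h)h^2)^{m-1}=N^{-(K+2+\delta)(m-1)}$ you get $\|A^{m-1}P_N\|=\mathcal{O}\bigl(N^{2+(K-1+\delta)(m-1)}\bigr)$. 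For $K=6$ this is $\mathcal{O}(N^{2+(5+\delta)(m-1)})$, which grows so fast in $m$ that the single factor $N^{6-K-\delta}$ from $(\Pi_N^0)^\perp P_N$ cannot absorb it even for $m=2$. The ``cascading'' suggestion does not rescue this: once you are on the range of $(\Pi_N^0)^\perp$, the master identity is unavailable, and $(\Pi_N^0)^\perp A$ contains the unbounded piece $(H_V-\tilde E_N)(\Pi_N^0)^\perp/(\epsilon(h)h^2)$, so iterating the splitting produces no further small factors.

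The missing idea is a sharp bound on $(S_h-\tilde E_N)P_N$ obtained \emph{algebraically}, not from the contour. Since $P_N$ commutes with $S_h$ and $(S_h-\tilde E_N)\Pi_N^0=W_h\Pi_N^0$, one has
\[
(S_h-\tilde E_N)P_N=(S_h-\tilde E_N)P_N(P_N-\Pi_N^0)+P_N W_h\Pi_N^0,
\]
and hence, because $\|\Pi_N^0-P_N\|<1$,
\[
(S_h-\tilde E_N)P_N=P_N W_h\Pi_N^0\bigl(I+(\Pi_N^0-P_N)\bigr)^{-1}.
\]
Lemma~\ref{lemma:moments1}(2) then gives $\|(S_h-\tilde E_N)P_N\|\le c_0\,h^2\epsilon(h)$, i.e.\ $\|AP_N\|=\mathcal{O}(1)$, so $\|A^{m-1}P_N\|=\|(AP_N)^{m-1}\|\le c_0^{\,m-1}$ with no $N$-growth at all. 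With this in hand your inductive scheme goes through exactly as you outlined, and the remainder is $\mathcal{O}(N^{-\delta})$ uniformly in $m$.
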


\begin{proof}
\noindent
1. To simplify notation, we suppress the $\theta$ in the notation.
We begin with a simple identity:
\beq\label{eq:reduction-id1}
\Pi_N^0 ( S_h - \tilde{E}_N ) P_N  =  \Pi_N^0
( S_h - \tilde{E}_N ) \Pi_N^0 P_N + \Pi_N^0 ( S_h - \tilde{E}_N ) (\Pi_N^0)^\perp P_N .
\eeq
We need an identity that follows from analyticity.
Since $\tilde{E}_N$ remains an eigenvalue of the
dilated hydrogen atom Hamiltonian, $H_V(\theta)$, we have
\beq\label{eq:analytic1}
\Pi_N^0 (\theta) ( H_V(\theta) - \tilde{E}_N ) \Pi_N^0 (\theta) = 0.
\eeq
Using this (\ref{eq:analytic1}), and the fact that $\Pi_N^0 H_V (\Pi_N^0)^\perp = 0$, we have
\bea\label{eq:reduction-id2}
\Pi_N^0 ( S_h - \tilde{E}_N ) \Pi_N^0  & = & \Pi_N^0 W_h \Pi_N^0  \\
\Pi_N^0 ( S_h - \tilde{E}_N ) (\Pi_N^0)^\perp & = & \Pi_N^0 W_h (\Pi_N^0)^\perp .
\eea
Substituting these into (\ref{eq:reduction-id1}), we obtain
\beq\label{eq:reduction-id3}
\Pi_N^0 ( S_h - \tilde{E}_N ) P_N  =  \Pi_N^0 W_h \Pi_N^0 P_N + \Pi_N^0 W_h (\Pi_N^0)^\perp P_N .
\eeq

\noindent
2. To estimate the second term on the right in (\ref{eq:reduction-id3}),
we use part (1) of Lemma \ref{lemma:proj1}
and part (2) of Lemma \ref{lemma:moments1}:
\beq\label{eq:reduction-id4}
\| \Pi_N^0 W_h (\Pi_N^0)^\perp P_N \| \leq \| \Pi_N^0 W_h \| ~\| (\Pi_N^0)^\perp P_N \| =
\mathcal{O}( N^{4-2K - \delta}). 
\eeq
We take the $m^{th}$ power of (\ref{eq:reduction-id3}) and, because of (\ref{eq:reduction-id4}), we have
\beq\label{eq:reduction-id5}
( \Pi_N^0 ( S_h - \tilde{E}_N ) P_N )^m  =  ( \Pi_N W_h \Pi_N^0 P_N )^m + \Pi_N^0 \tilde{R}_{m,N} ,
\eeq
where the error term $\tilde{R}_{m,N}$ has the form
\beq\label{eq:error-1}
\tilde{R}_{m,N} = \sum_{\ell=1}^m \left( \begin{array}{c}
 m \\ \ell \end{array} \right) \| \Pi_N^0 W_h \Pi_N^0 P_N \|^{m-\ell}
\| \Pi_N^0 W_h (\Pi_N^0)^\perp P_N \|^\ell,
\eeq
From Lemma \ref{lemma:proj1} part (1) and Lemma \ref{lemma:moments1},
we obtain
\beq\label{eq:error-2}
\| \Pi_N^0 W_h \Pi_N^0 P_N \|^{m-\ell} \| \Pi_N^0 W_h (\Pi_N^0)^\perp P_N \|^\ell \leq C(\theta)
(\epsilon(h) h^2)^m \mathcal{O}( N^{-\delta} ).
\eeq
Consequently, $\tilde{R}_{m,N}$ satisfies the estimate
\beq\label{eq:reduction-id6}
\| \tilde{R}_{m,N} \| \leq
(\epsilon (h) h^2)^m \mathcal{O}(N^{-\delta}).
\eeq

\noindent
3. We next prove that
\beq\label{eq:reduction-id7}
( \Pi_N^0 W_h \Pi_N^0 P_N )^m   = ( \Pi_N^0 W_h \Pi_N^0)^m  P_N + \Pi_N^0 K ((\epsilon (h) h^2)^m \mathcal{O}(N^{-\delta})) ,
\eeq
for all $m \in \N$ by induction on $m$, where we use the notation
$K$ introduced before Lemma \ref{lemma:moments1}.
We proceed by induction. Equality \eqref{eq:reduction-id7} is trivially true for $m=1$. We assume it is true for $m-1$ and verify it for $m$. We write
\bea\label{eq:interact-est1}
( \Pi_N^0 W_h \Pi_N^0 P_N )^m &=& ( \Pi_N^0 W_h \Pi_N^0 P_N ) ( \Pi_N^0 W_h \Pi_N^0 P_N )^{m-1} \nonumber \\
 &=& (  \Pi_N^0 W_h \Pi_N^0 -  \Pi_N^0 W_h \Pi_N^0 P_N^\perp) [ ( \Pi_N^0 W_h \Pi_N^0 )^{m-1} P_N  \nonumber \\
 & & + \Pi_N^0 K( (h^2 \epsilon(h))^{m-1} \mathcal{O} (N^{-\delta}))]
\nonumber \\
 &=& ( \Pi_N^0 W_h \Pi_N^0  )^m P_N + ( \Pi_N^0 W_h \Pi_N^0 ) K( (h^2 \epsilon(h))^{m-1} \mathcal{O}( N^{-\delta}) ) \nonumber \\
 & & -  \Pi_N^0 W_h \Pi_N^0 P_N^\perp ( \Pi_N^0 W_h \Pi_N^0 )^{m-1} P_N \nonumber \\
 & & - \Pi_N^0 W_h \Pi_N P_N^\perp \Pi_N^0 K( (h^2 \epsilon(h) )^{m-1} \mathcal{O} (N^{-\delta})) .
\eea
Using the estimates in Lemma \ref{lemma:moments1} for $\Pi_N^0 W_h \Pi_N^0$, we establish (\ref{eq:reduction-id7}).

\noindent
4. We next prove a similar estimate
\beq\label{eq:reduction-id8}
( \Pi_N^0 (S_h - \tilde{E}_N ) P_N)^m = \Pi_N^0 (S_h -
\tilde{E}_N)^m P_N +  \Pi_N^0 (\epsilon (h) h^2)^m \mathcal{O}(N^{-\delta}).
\eeq
In order to estimate the resonance term $(S_h - \tilde{E}_N) P_N$, we write
\beq\label{eq:res-est11}
(S_h - \tilde{E}_N ) P_N = (S_h - \tilde{E}_N ) P_N (P_N - \Pi_N^0) + P_N W_h \Pi_N^0 .
\eeq
Noting that $\| \Pi_N^0 - P_N \| < 1$, we have from \eqref{eq:res-est11},
\beq\label{eq:res-est22}
P_N (S_h - \tilde{E}_N) P_N = P_N W_h \Pi_N^0 ( 1 + (\Pi_N^0 - P_N))^{-1} .
\eeq
From Lemma \ref{lemma:proj1}, we obtain the estimate
\beq\label{eq:res-est33}
\|   (S_h - \tilde{E}_N) P_N \| \leq c_0 h^2 \epsilon (h) .
\eeq
Given estimate \eqref{eq:res-est33}, we prove \eqref{eq:reduction-id8} by induction. Assuming \eqref{eq:reduction-id8} for $m-1$, we
write
\bea\label{eq:res-est44}
\lefteqn{ ( \Pi_N^0 ( S_h - \tilde{E}_N ) P_N)^m - \Pi_N^0 ( S_h - \tilde{E}_N )^m P_N } \nonumber \\
&=& \Pi_N^0 ( S_h - \tilde{E}_N ) \left[ P_N ( \Pi_N^0 ( S_h - \tilde{E}_N ) P_N)^{m-1} -
 \right. \nonumber \\
 & & \left. - ( S_h - \tilde{E}_N )^{m-1} P_N \right]
\eea
From \eqref{eq:res-est33}, we have the bound
\beq\label{eq:res-est55}
\| (S_h - \tilde{E}_N )^{m-1} P_N \| \leq \|  [(S_h - \tilde{E}_N ) P_N ]^{m-1} \| \leq ( c_0 h^2 \epsilon(h))^{m-1} .
\eeq
Consequently, the norm of the left side of \eqref{eq:res-est44} may be bounded above by
\beq\label{eq:res-est66}
\| \Pi_N^0 ( S_h - \tilde{E}_N ) P_N [ - P_N (\Pi_N^0)^\perp ( S_h - \tilde{E}_N )^{m-1} P_N +  P_N
\Pi_N^0 K( (h^2 \epsilon(h) )^{m-1} \mathcal{O}(N^{-\delta})) ] \|.
\eeq
The estimate \eqref{eq:reduction-id8} for $m$ now follows from this and \eqref{eq:res-est33} and \eqref{eq:res-est55}.
This completes the proof of Lemma \ref{lemma:pretrace1}
\end{proof}


\subsection{Completion of the proof of Theorem \ref{thm:trace1}.}\label{subsec:completion-th4}

In order to estimate the trace of $(S_h(F) - \tilde{E}_N)^m P_N$ on the left side in (\ref{eq:trace1}),
we write
\beq\label{eq:decomp1}
 (S_h(F) - \tilde{E}_N)^m P_N =  \Pi_N^0 (S_h(F) - \tilde{E}_N)^m P_N + (\Pi_N^0)^\perp (S_h(F) - \tilde{E}_N)^m P_N .
\eeq
Due to Lemma \ref{lemma:pretrace1},
we have
\bea\label{eq:decomp2}
(S_h(F) - \tilde{E}_N)^m P_N  - (\Pi_N^0 W_h(F) \Pi_N^0 )^m  &= &
 (\Pi_N^0)^\perp (S_h(F) - \tilde{E}_N)^m P_N \nonumber \\
 & & + (\epsilon(h) h^2 )^m \Pi_N^0 R_{m,N} \nonumber \\
 & &  - (\Pi_N^0 W_h(F) \Pi_N^0 )^m P_N^\perp  .
\eea
We estimate the trace norm of each term on the right in (\ref{eq:decomp2}).

For the first term, we use the fact that $\| P_N \|_1 = d_N$, part (1)
of Lemma \ref{lemma:proj1}, and estimates on resonances in order to estimate
$\| ( S_h(F) - \tilde{E}_N)^m P_N \|$
as $N^{-1}$, and we obtain
\bea\label{eq:tracenorm1}
\| (\Pi_N^0)^\perp (S_h(F) - \tilde{E}_N)^m P_N \|_1 &\leq & \| (\Pi_N^0)^\perp P_N \| \| P_N \|_1 \|(S_h(F) - \tilde{E}_N)^m P_N \| \nonumber \\
  & \leq & d_N N^{- \alpha - 1} .
\eea
For the second term on the right in (\ref{eq:decomp2}), we have
\bea\label{tracenorm2}
\| (\epsilon(h) h^2 )^m \Pi_N^0 R_{m,N} \|_1 & \leq & (\epsilon(h) h^2 )^m  \| \Pi_N^0 \|_1 \| R_{m,N} \| \nonumber \\
 &\leq  & d_N (\epsilon(h) h^2 )^{m} \| R_{m,N}\| .
\eea
The third term is estimated as
\bea\label{eq:tracenorm3}
\| (\Pi_N^0 W_h(F) \Pi_N^0 )^m P_N^\perp \|_1 &\leq & \| \Pi_N^0 \|_1 \|(\Pi_N^0 W_h(F) \Pi_N^0 )^m \| \| \Pi_N^0 P_N^\perp \| \nonumber \\
 &\leq & d_N (\epsilon (h) h^2)^m N^{- \alpha} .
\eea

Finally, we write
\bea\label{eq:tracenorm4}
Tr ((S_h(F) - \tilde{E}_N)^m P_N ) &=& Tr  (\Pi_N^0 W_h(F) \Pi_N^0 )^m  \nonumber \\
 & & + Tr \{ (S_h(F) - \tilde{E}_N)^m P_N - (\Pi_N^0 W_h(F) \Pi_N^0 )^m \} \nonumber \\
\eea
with
\bea\label{tracenorm5}
\lefteqn{ | Tr \{ (S_h(F) - \tilde{E}_N)^m P_N - (\Pi_N^0 W_h(F) \Pi_N^0 )^m \} | } \nonumber \\
 & \leq & \| (S_h(F) - \tilde{E}_N)^m P_N - (\Pi_N^0 W_h(F) \Pi_N^0 )^m \|_1
\nonumber \\
 & \leq &  d_N (\epsilon (h) h^2)^m N^{- \alpha} .
\eea
Finally, restoring the complex parameter $\theta$, we analyze the function
\beq\label{eq:analycont2}
\xi (\theta) \equiv Tr  ( (\Pi_N^0 (\theta) W_h(F, \theta) \Pi_N^0 (\theta) )^m ).
\eeq
For $\theta \in \R$, we have,
\bea\label{eq:analycont3}
\xi (\theta) & = & Tr  (D_{exp ( \theta)} (\Pi_N^0  W_h(F) \Pi_N^0 )^m D_{exp (-\theta)} ) \nonumber \\
 &=& Tr ( (\Pi_N^0 W_h(F) \Pi_N^0  )^m ) .
\eea
The function $\theta \rightarrow \xi (\theta)$ is analytic in a neighborhood of the real axis and
independent of $\theta$ on the real axis, and is thus constant. Hence we can write
\beq\label{eq:tracenorm6}
Tr ((S_h(F, \theta) - \tilde{E}_N)^m P_N (\theta) ) = Tr ( (\Pi_N^0 W_h(F) \Pi_N^0 )^m ) + E_{m,N}(\theta) ,
\eeq
where $E_{m,N}(\theta) = \mathcal{O} ((\epsilon (h) h^2 )^m N^{-\alpha})$. This completes the proof of Theorem \ref{thm:trace1}.


%
%
%
%

\section{Trace estimate for the Stark perturbation of the hydrogen atom}\label{sec:poly1}

The next step in the proof of Theorem \ref{thm:main1} consists of evaluating the trace on the right side
of (\ref{eq:trace2}). Let $\tilde{W}_h (F) = h^2 F x_1$.
The sum on the right side of (\ref{eq:trace2}) is
\beq\label{eq:trace3}
\frac{1}{d_N} \sum_{i=1}^{d_N} \left( \frac{\tau_{N,i}}{h^2 \epsilon (h)} \right)^m
= \frac{1}{d_N} Tr ((\Pi_N^0 \tilde{W}_h(f) \Pi_N^0)^m ).
\eeq
note that $\tilde{W}_h (F) = W_h (F) / (h^2 \epsilon (h)).$
Since $\tilde{W}_h (F)$ is a polynomially-bounded perturbation, we use a
general result of Thomas-Villegas-Blas \cite[Theorem 4.2]{thomas-villegas1}
in order to evaluate the semiclassical limit of the
expression on the right in (\ref{eq:trace3}) as $N \rightarrow \infty$.

\subsection{Polynomially-bounded perturbations}\label{subsec:poly1}

The main result of \cite[Theorem 4.2]{thomas-villegas1}
on the semiclassical limit for polynomially bounded perturbations is the following theorem. We slightly
change notation from \cite{thomas-villegas1} and write $h = 1/N$ and
$k = N$.

\begin{theorem}\cite[Theorem 4.2]{thomas-villegas1}\label{th:szego-limit1}
Let $V$ be a polynomially bounded, continuous function on $\R^3$ and let $g: \R \rightarrow \R$
be continuous. Then, we have
\beq\label{eq:szego-limit1}
\lim_{N \rightarrow \infty} \frac{1}{N^2} Tr ( \Pi_N^0 g ( \Pi_N^0 D_{N^{-2}} V D_{N^2} \Pi_N^0 ))
= \int_{\alpha \in \mathcal{A}} ~ g \left( \frac{1}{2 \pi} \int_0^{2 \pi} ~V(x(t, \alpha)) ~dt \right) ~d \mu ( \alpha) .
\eeq
\end{theorem}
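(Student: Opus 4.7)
The plan is to reduce the result to polynomial $g$ by a Weierstrass approximation argument, then use the overcomplete coherent state resolution of $\Pi_N^0$ recalled in Appendix 1 to rewrite the trace as an integral over the parameter space $\mathcal{A}$ and finally identify the limit with the classical orbit average. First I would establish the uniform operator norm bound
\[
\| \Pi_N^0 D_{N^{-2}} V D_{N^2} \Pi_N^0 \| \leq C(V),
\]
independent of $N$. This is precisely the kind of bound supplied by Lemma \ref{lemma:moments1} (with $\theta=0$ and a polynomial weight $\|x\|^n$ in place of the Coulomb singularity): the localization estimate \eqref{eq:cost-proj01} reduces $\Pi_N^0 D_{N^{-2}} \|x\|^n D_{N^2} \Pi_N^0$ to the contribution of $\{\|x\| \leq r_0\}$ up to $\mathcal{O}(N^{-\infty})$, so the polynomial bound on $V$ turns into a uniform bound. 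Once this is known, the spectrum of $\Pi_N^0 D_{N^{-2}} V D_{N^2} \Pi_N^0$ lies in a fixed compact interval $I \subset \R$, and by Stone--Weierstrass we may restrict attention to monomials $g(\lambda)=\lambda^m$, $m \in \N$.

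For the monomial case, I would expand using the coherent state resolution
\[
\Pi_N^0 = \int_{\mathcal{A}} |\Psi_{\alpha,N}\rangle \langle \Psi_{\alpha,N}| \, d\mu(\alpha),
\]
so that
\[
\frac{1}{d_N} \mathrm{Tr}\bigl((\Pi_N^0 D_{N^{-2}} V D_{N^2} \Pi_N^0)^m\bigr)
= \frac{1}{d_N} \int_{\mathcal{A}^m} \prod_{j=1}^m M_N(\alpha_j,\alpha_{j+1}) \, d\mu(\alpha_1)\cdots d\mu(\alpha_m),
\]
(indices mod $m$) where the matrix element is
\[
M_N(\alpha,\beta) \equiv \langle \Psi_{\alpha,N}, D_{N^{-2}} V D_{N^2} \Psi_{\beta,N}\rangle
= \int_{\R^3} \overline{D_{N^2}\Psi_{\alpha,N}(x)} \, V(x) \, D_{N^2}\Psi_{\beta,N}(x)\,d^3x.
\]
The key semiclassical input is the concentration and localization properties of the rescaled coherent states $D_{N^2}\Psi_{\alpha,N}$: the estimate \eqref{eq:cost-est1a} (applied with $\theta=0$) tells us that $|D_{N^2}\Psi_{\alpha,N}|^2$, viewed as a measure on $\R^3$, concentrates for large $N$ on the closed Kepler orbit (regularized as in \cite{uribe-villegas1}) associated with the parameter $\alpha \in \mathcal{A}$, and the time-averaging along this orbit should give
\[
M_N(\alpha,\alpha) = \frac{1}{2\pi}\int_0^{2\pi} V(x(t,\alpha))\,dt + o(1), \qquad N \to \infty.
\]
Moreover one expects $M_N(\alpha,\beta)$ to be $o(1)$ off the diagonal $\alpha=\beta$ in the sense that iterated products collapse onto the diagonal in the limit.

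Combining these ingredients, the composition of $m$ factors in $\mathcal{A}^m$ should collapse, in the large $N$ limit, to an integral over a single copy of $\mathcal{A}$ in which each matrix element contributes its diagonal limit, giving
\[
\frac{1}{d_N}\mathrm{Tr}(\cdots)^m \longrightarrow \int_{\mathcal{A}} \left( \frac{1}{2\pi}\int_0^{2\pi} V(x(t,\alpha))\,dt \right)^m d\mu(\alpha),
\]
which is exactly \eqref{eq:szego-limit1} for $g(\lambda)=\lambda^m$. Extending from monomials to continuous $g$ on $I$ by Stone--Weierstrass, together with the uniform bound on $\|\Pi_N^0 D_{N^{-2}} V D_{N^2} \Pi_N^0\|$, closes the argument.

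The main obstacle, and the technical heart of the proof, is the off-diagonal decay of the kernel $M_N(\alpha,\beta)$: one must show that, after iterating $m$ times and normalizing by $d_N = N^2$, the only surviving contribution comes from the diagonal $\alpha_1=\alpha_2=\cdots=\alpha_m$. This requires a quantitative version of the coherent state overlap estimate $|\langle \Psi_{\alpha,N},\Psi_{\beta,N}\rangle|$ decaying in $N$ when $\alpha \neq \beta$, and then combining this with the localization \eqref{eq:cost-est1a} to get uniform control of the multi-fold integral. All the other steps --- the approximation by polynomials, the bound on the restricted operator, and the identification of the diagonal limit with the classical orbit average via the known semiclassical asymptotics of $|D_{N^2}\Psi_{\alpha,N}|^2$ --- are either standard or already contained in the tools built up in the present paper and in \cite{thomas-villegas1,uribe-villegas1}.
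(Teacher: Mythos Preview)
The paper does not prove this theorem; it is quoted from \cite[Theorem 4.2]{thomas-villegas1}. What the paper does provide, immediately after the statement, is a brief sketch of the monomial case $g(s)=s^m$ needed for the application: expand the trace using the coherent-state resolution \eqref{eq:proj-coherentstates1}, invoke \cite[Lemma 4.3]{thomas-villegas1} to write each matrix element \eqref{eq:matrixele22} in the form \eqref{eq:szego-limit2} (inner product of coherent states times the classical orbit average, plus lower order), and then let the iterated inner products collapse the $m$-fold integral to a single integral over $\mathcal{A}$. Your outline follows exactly this strategy, including the reduction from continuous $g$ to monomials via the uniform bound coming from Lemma \ref{lemma:moments1}, so at the level of approach you are aligned with the paper and with \cite{thomas-villegas1}.

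One point to correct: your coherent-state resolution drops the factor $N^2$ present in \eqref{eq:proj-coherentstates1}. With the correct normalization, the $m$-fold integral carries an extra factor $(N^2)^{m-1}$ after dividing by $d_N$, and it is precisely this factor that governs the ``collapse to the diagonal.'' The mechanism is not that $M_N(\alpha,\beta)$ vanishes pointwise for $\alpha\neq\beta$ (the diagonal has $\mu$-measure zero, so that alone would give $0$); rather, as \eqref{eq:szego-limit2} makes explicit, each matrix element factors as the coherent-state overlap $\langle \Psi_{\alpha_i,N},\Psi_{\alpha_{i+1},N}\rangle$ times the orbit average, and the weighted overlaps $N^2\,|\Psi_{\alpha,N}\rangle\langle\Psi_{\alpha,N}|$ act as an approximate reproducing kernel on $\mathcal{A}$ that contracts the $m$-fold integral to a single one. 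Once you restore the $N^2$ and phrase the collapse this way, your sketch matches the paper's.
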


We remark that when $V$ is a polynomial, as in our case, the proof of Theorem \ref{th:szego-limit1} is easier.
In order to apply this result to (\ref{eq:trace3}), we take $g(s) = s^m$ and $V(x) = Fx_1$.
Then we have from (\ref{eq:dilation0}) that $( D_{N^2}^{-1} V D_{N^2} ) (x) = \tilde{W}_h(F) (x)$ with
$h = 1 / N$.

Using the decomposition of the projector $\Pi_N^0$ into coherent states \eqref{eq:proj-coherentstates1},
the trace on the left in (\ref{eq:szego-limit1}), with $g(s) = s^m$,
may be expressed as an $m$-fold multiple integral with respect to $\alpha_j \in \mathcal{A}$
of matrix elements, in the momentum representation, given by
\beq\label{eq:matrixele22}
(i / N) \langle J^{1/2} K \Phi_{\alpha_i ,N} , (F \nabla_{p_1} ) J^{1/2} K \Phi_{\alpha_{i+1}, N} \rangle,
\eeq
where $\Phi_{\alpha, N}$ is the function of $\Sp^3$ defined in \eqref{eq:cost-defn1}.
Lemma 4.3 of \cite{thomas-villegas1} states that for any $\delta > 0$, the matrix element in \eqref{eq:matrixele22}
is given by
\bea\label{eq:szego-limit2}
(i / N) \langle J^{1/2} K \Phi_{\alpha_i,N} ,  J^{1/2} K \Phi_{\alpha_{i+1}, N} \rangle
 \left( \frac{1}{2 \pi}
\int_0^{2 \pi} ~F ~x(t, \alpha_{i})_1 ~dt  + \mathcal{O}( N^{\delta - 1/2} ) \right) &   \nonumber \\
   +  \mathcal{O}( N^{- \infty} ), &  \nonumber \\
\eea
where $x(t, \beta)_1$ is the first component of the vector $x(t, \beta) \in \R^3$. This vector and a corresponding momentum vector
$p(t, \beta)$, form a solution to Hamilton's equations for motion for the Hamiltonian $h(x, p) = (1/2) p^2 -  |x |^{-1}$ with energy $-1/2$.
The parameter $\beta \in \mathcal{A}$ labels the Kepler orbit with energy $-1/2$.
In the limit as $N \rightarrow \infty$, the matrix elements approach zero unless $\alpha_i = \alpha_{i-1}$.
This reduces the multiple $m$-fold integral to a single integral over $\mathcal{A}$ of the $m^{\rm th}$-power
of the integral
in \eqref{eq:szego-limit2}.


\subsection{Conclusion of the proof of Theorem \ref{thm:main1}}\label{subsec:proof-main1}

We have now proved the following result. For $\epsilon (h) = h^{6 + \delta}$, with $0 < \delta < 1$
and $h = 1/N$, the resonance shifts $z_{N, i}(F, 1/N)$ satisfy, for any $m \in \N$, the limit
\beq\label{eq:first1}
\lim_{N \rightarrow \infty} \sum_{j=1}^{d_N} \left( \frac{ z_{N,i}(F, 1/N) - E_N (1/N)}{ \epsilon(1/N)}
\right)^m
= \int_{\alpha \in \mathcal{A}} ~ \left( \frac{1}{2 \pi} \int_0^{2 \pi} ~ [ F x(t, \alpha)_1 ] ~dt \right)^m ~d \mu ( \alpha) .
\eeq
To finish the proof of Theorem \ref{thm:main1}, two steps remain to be done.
First, we re-write the integral over $\mathcal{A}$
on the right in \eqref{eq:first1} in terms of the integral over the energy surface $\Sigma (- 1/2)$.
Second, we show how to replace the monomial $s^m$ by a function
$\rho$ that is analytic in a fixed disk about the
origin.

As for the first task, we refer to \cite[pages 141-142]{uribe-villegas1}. It is noted there that the push-forward of the measure $\mu$
on $\mathcal{A}$ is the Liouville measure $\mu_L$ on the energy  surface
$\Sigma (-1/2)$. Furthermore, the Kepler orbit corresponds to the Kepler flow $\tilde{\phi}_t(x,p)$ on this energy surface.

As for the second task, the function $\rho$ of Theorem \ref{thm:main1}
is analytic in a disk of radius $3F$ about the origin. Since the perturbation $F \cdot (\tilde{\phi}_t (x,p) )_1$
is bounded by $2F$ for orbits $\tilde{\phi}_t$ on the energy surface $\Sigma (-1/2)$,
the estimate \eqref{eq:first1} guarantees convergence in the trace of the power series expansion for $\rho$.
This concludes the proof of Theorem \ref{thm:main1}.


\section{Appendix 1: Coherent states for the hydrogen atom}\label{sec:app1-coherentstates1}

We review the construction and properties of the coherent states that form an overcomplete set in the eigenspace $\mathcal{E}_\ell$ of
the hydrogen atom Hamiltonian corresponding to the eigenvalue $E_\ell = - 1/ (2 h^2 \ell^2)$.
Let $\mathcal{A}$ be the five real-dimensional subspace of $\C^4$
defined by
\beq\label{eq:defnA1}
\mathcal{A} = \{ \alpha = (
\alpha_1, \alpha_2, \alpha_3, \alpha_4 ) ~|~ \alpha_j \in \C, ~\|
\Re \alpha_j \| = \| \Im \alpha_j \| = 1, ~ \Re \alpha \cdot \Im
\alpha = 0 \} .
\eeq
This provides a parametrization of the co-sphere
bundle $S^* \Sp^3$ of the three-sphere. There is a
$SO(4)$-rotationally invariant probability measure on $\mathcal{A}$
that we denote by $\mu$. Coherent states on $\Sp^3$ have the form
\beq\label{eq:cost-defn1}
\Phi_{\alpha, \ell}(\omega) = a(\ell) ( \alpha
\cdot \omega)^\ell, ~~\omega \in \Sp^3, ~~ \alpha \in \mathcal{A}, ~~\ell \in 0, 1, 2, \ldots.
\eeq
The coefficient $a (\ell) \sim \ell^{1/2}$ is fixed by the requirement
that the $L^2 ( \Sp^3)$-norm of $\Phi_{\alpha, N}$ is equal to one, see \cite[(2.11)]{thomas-villegas1}.
These states are hyper-spherical harmonics. They
are eigenstates of the spherical Laplacian $- \Delta_{\Sp^3}$
with eigenvalue $\ell (\ell + 2)$. The entire family $\{ \Phi_{\alpha, \ell}(\omega) ~|~
\alpha \in \mathcal{A} \}$ is over complete and spans the eigenspace
of $- \Delta_{\Sp^3}$ with eigenvalue $\ell(\ell+2)$.
We note that these states have the property that as $N \rightarrow \infty$ they concentrate on the
great circle $\{ \omega \in \Sp^3 ~|~ | \alpha \cdot \omega | = 1 \}$
generated by the real and imaginary parts of $\alpha$.

In momentum space $\R^3$, the coherent states have the
following form. The inverse of the stereographic projection from the three
sphere $\Sp^3$ to $\R^3$ is the mapping $p \in \R^3
\rightarrow \omega (p) \in \Sp^3$ defined by
\bea\label{eq:stereo1}
\omega_j (p) &=& \frac{2 p_j}{ \|p\|^2 + 1}, ~j=1,2, 3 \nonumber \\
\omega_4 (p) &=& \frac{\|p\|^2 - 1 }{\| p \|^2 + 1} .
\eea
For any $\alpha \in \mathcal{A}$, we define
\beq\label{eq:cohst-p1}
\hat{\Psi}_{\alpha, \ell} (p) = a(\ell-1) \ell^{3/2} \left( \frac{2}{\ell^2 \|p\|^2 +1} \right)^2 (\alpha \cdot \omega (\ell p) )^{\ell-1},
 ~p  \in \R^3 .
\eeq
These functions are in $L^2 ( \R^3)$. Their Fourier transforms are eigenfunctions of $H_V$ with
eigenvalue $E_\ell$ \cite[section II.B]{BI:1966}.
They form an overdetermined basis of the $\ell^2$-dimensional eigenspace $\mathcal{E}_\ell \subset L^2 (\R^3)$
in the momentum space representation.

For $\theta \in \R$, we scale these momentum space functions by $e^\theta$ to obtain
\beq\label{eq:coh-analy1}
\hat{\Psi}_{\alpha, \ell} (e^\theta p) = a(\ell-1) \ell^{3/2}
\left( \frac{2}{\ell^2 e^{ 2 \theta} \|p\|^2 +1} \right)^2 (\alpha \cdot \omega (\ell e^\theta p) )^{\ell-1}, ~p  \in \R^3 .
\eeq
These functions are analytic $L^2$-valued functions for $| \Im \theta | < \pi / 2$.

The configuration space coherent states are obtained by the inverse Fourier transform:
\bea\label{eq:cohst-p2}
{\Psi}_{\alpha, \ell} (x) &=& \frac{1}{(2 \pi)^{3/2}} \int_{\R^3} ~d^3 p ~e^{i x \cdot p} ~ \hat{\Psi}_{\alpha, \ell} (p) \nonumber \\
 &=& \frac{a(\ell-1) \ell^{3/2}}{ (2 \pi)^{3/2}} \int_{\R^3} ~d^3 p ~e^{i x \cdot p}
~\left( \frac{2}{\ell^2 \|p\|^2 +1} \right)^2 (\alpha \cdot \omega (\ell p) )^{\ell-1} . \nonumber \\
\eea
These form an overdetermined basis of normalized (but not orthogonal)
$L^2$-functions for $\mathcal{E}_\ell \subset L^2 (\R^3)$ in the configuration
space picture. Let $\mu$ be the probability measure on $\mathcal{A}$.
The orthogonal projector $\Pi_\ell^0$ onto the eigenspace $\mathcal{E}_\ell$
may be written as
\beq\label{eq:proj-coherentstates1}
\Pi_\ell^0  =  \ell^2 \int_{\mathcal{A}} ~| \Psi_{\alpha, \ell} \rangle \langle \Psi_{\alpha, \ell}| ~d \mu(\alpha) .
\eeq

For $\theta \in \R$, the dilated coherent states are
\bea\label{eq:coherent-dila1}
\Psi_{\alpha, \ell} (x; \theta) & \equiv & D_{exp (\theta)} \Psi_{\alpha, \ell} (x) = e^{3 \theta / 2} \Psi_{\alpha, \ell} (e^\theta x) \nonumber \\
 &=& \frac{e^{-3 \theta/ 2}}{ (2 \pi )^{3/2}} \int_{\R^3} ~d^3 p ~e^{i x \cdot p} ~ \hat{\Psi}_{\alpha, \ell} (e^{- \theta} p) \nonumber \\
 & = & a(\ell - 1) \ell^{3/2} \frac{e^{-3 \theta/2}}{ (2 \pi )^{3/2}} \int_{\R^3} ~d^3 p ~e^{i x \cdot p}
 \left( \frac{2}{\ell^2 e^{-2 \theta} \|p\|^2 +1} \right)^2 (\alpha \cdot \omega (\ell e^{-\theta} p) )^{\ell-1}  . \nonumber \\
\eea
From the comments after \eqref{eq:coh-analy1}, it follows that $\Psi_{\alpha, \ell} (x; \theta)$ has an
analytic continuation as an $L^2$-valued function of $\theta$ for $| \Im \theta | < \pi / 2$.
We also consider the dilated projector $\Pi_\ell^0 (\theta) = D_{exp(\theta)} \Pi_\ell^0 D_{exp (- \theta)}$.
We extend these dilated operators to $\theta \in \C$ provided $|\Im \theta | < \pi / 2$.
As mentioned above, the coherent states concentrate on the Kepler orbit as $\ell \rightarrow \infty$.
Lemma \ref{lemma:moments1} is a consequence of this fact.






\section{Appendix 2: Moments of the hydrogen atom eigenfunctions}\label{sec:app2-ef}

We give the brief proof of the localization property of the eigenstates of the hydrogen atom Hamiltonian
discussed in section \ref{subsec:dlated-cohst1}.
The eigenfunctions of $H_V$ may be written as
$\psi_{n \ell m}(r, {\tilde{\theta}}, \phi) = e^{- r /n }F_{nl}(r) Y_{lm}( {\tilde{\theta}}, \phi)$,
with $n  = 1, 2, \ldots$, the principal quantum number. The functions
$Y_{\ell m}$ are the spherical harmonics $(\ell, m)$ labeling the angular momentum
so that $0  \leq \ell \leq n-1$ and $- \ell \leq m \leq \ell$.
The radial component $F_{nl} (r) = A_{n \ell} (2 r / n)^{\ell} L_{n-\ell-1}^{2\ell+1}(2 r/ n)$,
where $L_{n-\ell-1}^{2\ell+1}(r)$ is the associated Laguerre polynomial of degree $n- \ell - 1$.
The normalization constant is $A_{n \ell} = (2 / n^2) \sqrt{ (n - \ell -1)! / [ (n + \ell )! ]^3 }$.
The dilated eigenfunctions $\psi_{n \ell m} (\theta) (r) \equiv
(D_{exp( \theta)} \psi_{n \ell m} )(r) = e^{ 3 \theta / 2} \psi_{n
\ell m} (e^{\theta} r)$ are analytic $L^2$-valued functions in the
strip $|\Im \theta | < \pi / 2$. Consequently, the function
$f(\theta) \equiv \langle \psi_{n \ell m}(\theta) , \| x \|^k
\psi_{n \ell' m'}(\theta) \rangle$ is analytic on the strip $| \Im
\theta | < \pi / 2$. For $\theta \in \R$, we have $f(\theta) =
e^{-k \theta} f(0)$. The function $\tilde{f}(\theta) = e^{-k
\theta} f(0)$ is an entire analytic function. By the identity
principle for analytic functions, we have $\tilde{f}(\theta) =
f(\theta)$ on the strip $| \Im \theta|< \pi / 2$.


We first want to estimate the large $n$ behavior of the
expectation of powers of the position operator in the dilated
eigenstates $\psi_{n \ell m}$ as in \eqref{eq:matrixele0}. Since
$f(\theta) = e^{-m \theta} f(0)$, this reduces to an estimate on
moments of the position operator. For any $k > -2 \ell - 1$, these
moments are well-known and computed in section 3, Appendix B of
Messiah \cite{Messiah}. Let \beq\label{eq:matrixele2} \langle \| x
\|^k \rangle_{n \ell m} \equiv \langle \psi_{n \ell m}(\theta) ,
\| x \|^k
\psi_{n \ell' m'}(\theta) \rangle.
\eeq
Note that because of the spherical symmetry, the quantum numbers $(\ell, m)$ must be the same in \eqref{eq:matrixele2}.
Properties of the Laguerre polynomials leads to
\beq\label{eq:firstmoment1}
\langle \| x \| \rangle_{n \ell m}  = \frac{1}{2} [ 3n^2 - \ell (\ell+1) ],
\eeq
and a recursion formula for higher moments $k \geq 2$:
 \beq\label{eq:k-moment1}
\langle \| x \|^k \rangle_{n \ell m}  = n^2 \frac{2 k +1}{k + 1}
\langle \| x \|^{k-1} \rangle_{n \ell m} - \frac{n^2 k}{4(k+1)} [
(2 \ell +1)^2  - k^2 ] \langle \| x \|^{k-2} \rangle_{n \ell m}.
\eeq It is easy to verify from \eqref{eq:firstmoment1} and
\eqref{eq:k-moment1} that \beq\label{eq:matrixele222} \langle
\psi_{n \ell m} , \| x \|^k
\psi_{n \ell m}(\theta) \rangle = \mathcal{O}( n^{2k} ),
\eeq
since $0 \leq l \leq n-1$ (see \cite[section B.3]{Messiah}), verifying \eqref{eq:matrixele0}.

Secondly, we want to compute the matrix elements of the Stark perturbation.
It is easier to do the calculation with the electric field in the three direction $x_3$
because of the properties of the usual spherical harmonics.
The result is independent of the field direction so we assume this here.
Recalling the identity principle for analytic functions used above,
the matrix element of the Stark perturbation is
\bea\label{eq:matrixele1}
\langle \psi_{n \ell m}(\theta) , W_h(F, \theta) \psi_{n \ell' m'}(\theta) \rangle &= & \epsilon (h) h^2 e^\theta F
\langle \psi_{n \ell m}(\theta) , x_3 \psi_{n \ell' m'}(\theta) \rangle \nonumber \\
 &=& \epsilon (h) h^2 F \langle \psi_{n \ell m}, x_3 \psi_{n \ell' m'} \rangle , ~~ | \Im \theta | < \pi / 2  .
 \nonumber \\
 & &
\eea
We write $x_3= r \cos \tilde{\theta}$, where $\tilde{\theta}$ is the angle with the $x_3$-axis.
By the Cauchy-Schwarz inequality and the first moment estimate \eqref{eq:firstmoment1}, we have
\beq\label{eq:stark-moment1}
| \langle \psi_{n \ell m}, x_3 \psi_{n \ell' m'} \rangle |  \leq \| r^{1/2} \psi_{n \ell' m'} \|
~\| r^{1/2} \psi_{n \ell m} \|  = \mathcal{O}( n^2) ,
\eeq
since $0 \leq l \leq n-1$ (see \cite[section B.3]{Messiah}.)


\section{Appendix 3: The quadratic estimate}\label{appendix:QE}

We restate the quadratic estimate appearing in Proposition II.4 of \cite{herbst1}.
In the notation of that paper, let $h(\alpha) = - \Delta + \alpha x_1$ where $| \Im \alpha | > 0$.
Let $\theta = \arg \alpha$ and define two constants
\beq\label{eq:constants1}
c(\alpha) = (3/2)(1 - | \cos \theta |)^{3/2} | \sin \theta |^{3/2} | \alpha|^{4/3}, ~~
\beta (\alpha) = (1 - | \cos \theta |) / 2.
\eeq
then, for all $\psi \in S (R^3)$, we have
\beq\label{eq:QE2}
\| h(\alpha) \psi \|^2 + c(\alpha) \| \psi \|^2 \geq \beta (\alpha) ( \| \Delta \psi \|^2
+ \| x_1 \psi \|^2 ).
\eeq
In our case, we have
\beq\label{eq:QE3}
H_0 (h, F, \theta) = e^{-2 i \theta} [ - \Delta + e^{3 i \theta} h^4 \epsilon (h) F x_1 ]
=e^{-2 i \theta} h(\tilde{\alpha}),
\eeq
where $\tilde {\alpha} = e^{3 i \theta} h^4 \epsilon (h) F $.
From \eqref{eq:constants1}, we find that
\beq\label{eq:constants2}
c(\tilde{\alpha}) = \mathcal{O}(N^{- (4/3)(4 + K + \delta)}), ~~ \beta(\tilde{\alpha})
= \mathcal{O}(1).
\eeq
Consequently, we have the following bound for $z \in \gamma_N$:
\beq\label{eq:QE4}
\| x_1 ( H_0(h,F,\theta) - z)^{-1} \| = \mathcal{O}(1).
\eeq


\end{document}